\newcommand {\bpi}{\mbox{\boldmath $\pi$}}
\newcommand {\bphi}{\mbox{\boldmath $\phi$}}
\newcommand {\bP}{\mbox{\boldmath $P$}}
\begin{document}

% Page heads
\markboth{P. Wang et al.}{Efficiently Estimating Motif Statistics of Large Networks}

% Title portion
\title{Efficiently Estimating Motif Statistics of Large Networks}
\author{Pinghui Wang
\affil{Huawei Noah's Ark Lab}
John C.S. Lui
\affil{The Chinese University of Hong Kong}
Bruno Ribeiro
\affil{Carnegie Mellon University}
Don Towsley
\affil{University of Massachusetts Amherst}
Junzhou Zhao
\affil{Xi'an Jiaotong University}
Xiaohong Guan
\affil{Xi'an Jiaotong University}}
% NOTE! Affiliations placed here should be for the institution where the
%       BULK of the research was done. If the author has gone to a new
%       institution, before publication, the (above) affiliation should NOT be changed.
%       The authors 'current' address may be given in the "Author's addresses:" block (below).
%       So for example, Mr. Abdelzaher, the bulk of the research was done at UIUC, and he is
%       currently affiliated with NASA.

\begin{abstract}
Exploring statistics of locally
connected subgraph patterns (also known as network motifs)
has helped researchers better understand the structure and function of
biological and online social networks (OSNs).
Nowadays the massive size of some critical networks -- often
stored in already overloaded relational databases -- effectively limits
the rate at which nodes and edges can be explored, making it
a challenge to accurately discover subgraph statistics.
In this work, we propose {\em sampling methods} to accurately
estimate subgraph statistics from as few queried nodes as possible.
We present sampling algorithms that efficiently and accurately
estimate subgraph properties of massive networks.
Our algorithms require no pre-computation or
complete network topology information. At the same time,
we provide theoretical guarantees of convergence.
We perform experiments using widely known data sets,
% and real networks
%Experimental results show that,
and show that for the same accuracy, our algorithms require an order of magnitude
less queries (samples) than the current state-of-the-art algorithms.
\end{abstract}

\category{J.4}{Social and Behavioral Sciences}{Miscellaneous}

\terms{Algorithms, Experimentation}

\keywords{Social network, graph sampling, random walks, subgraph patterns, network motifs}

\acmformat{Pinghui Wang, John C.S. Lui, Bruno Ribeiro, Don Towsley, Junzhou Zhao, and Xiaohong Guan, 2014.
Efficiently Estimating Motif Statistics of Large Networks.}
% At a minimum you need to supply the author names, year and a title.
% IMPORTANT:
% Full first names whenever they are known, surname last, followed by a period.
% In the case of two authors, 'and' is placed between them.
% In the case of three or more authors, the serial comma is used, that is, all author names
% except the last one but including the penultimate author's name are followed by a comma,
% and then 'and' is placed before the final author's name.
% If only first and middle initials are known, then each initial
% is followed by a period and they are separated by a space.
% The remaining information (journal title, volume, article number, date, etc.) is 'auto-generated'.

\begin{bottomstuff}
This work was supported by the NSF grant CNS-1065133, ARL Cooperative Agreement W911NF-09-2-0053, and ARO under MURI W911NF-08-1-0233. The views and conclusions contained in this document are those of the authors and should not be interpreted as representing the official policies, either expressed or implied of the NSF, ARL, or the U.S. Government. This work was also supported in part by the NSFC funding 60921003 and 863 Program 2012AA011003 of China.

Author's addresses: Pinghui Wang, HUAWEI Noah's Ark lab in Hong Kong. A part of the research was done when he was at Department of Computer Science and Engineering, The Chinese University of Hong Kong.
John C.S. Lui, Department of Computer Science and Engineering, The Chinese University of Hong Kong, Hong Kong; Bruno Ribeiro, School of Computer Science, Carnegie Mellon University, PA, US;
Don Towsley, Department of Computer Science, University of Massachusetts Amherst, MA, US;
Junzhou Zhao {and} Xiaohong Guan, MOE Key Laboratory for Intelligent Networks and Network Security, Xi'an Jiaotong University, Shaanxi, China.
\end{bottomstuff}

\maketitle

%!TEX root = samplingmotifs.tex
\section{Introduction} \label{sec:introduction}
Understanding the structure and function of complex systems is of wide
interest across many fields of science and technology,
from sociology to physics and biology.
%Network motifs are local structural patterns of interconnections occurring in a network at numbers significantly higher than those one would expect in randomized networks.
%The existence of a particular network motif in a network may shed light into the function of the network and how nodes form connections in the network.
%For instance, motifs can be used to uncover basic building blocks of regulatory networks in biology~\cite{Milo2002} and
Networks with similar topological features such as
degree distribution or graph diameter can exhibit significantly
different local structure. Thus, there is much interest in exploring
small connected subgraph patterns in networks,
which are often shaped during their growth
and have been used to characterize communication and evolution patterns
in OSNs~\cite{ChunIMC2008,Kunegis2009,ZhaoNetsci2011,Ugander2013}.
For example, simple 3-node subgraph classes such as
``the enemy of my enemy is my friend"
and ``the friend of my friend is my friend" are well known evolution patterns in social networks.
Kunegis et al.~\cite{Kunegis2009} considered the significance of
these subgraph patterns in Slashdot Zoo\footnote{www.slashdot.org} evaluating the stability of signed friend/foe subgraphs.
Other examples include  counting relative frequencies of closed triangle (i.e., three users connected with each other), which are probably more prevalent in Facebook than Twitter, since Twitter serves more as a news media than an OSN service~\cite{Kwak2010}.
More complex examples of $k$-node subgraph frequencies, $k > 3$, include
Milo et al.~\cite{Milo2002} defined network motifs (or local
subgraph patterns) as small subgraph classes occurring in networks at numbers
that are significantly larger than found in random networks.
Network motifs have been used for pattern recognition in gene expression
profiling~\cite{Shenorr2002}, protein-protein interaction
predication~\cite{Albert2004}, and coarse-grained topology
generation~\cite{Itzkovitz2005}.

Unfortunately, characterizing the frequencies of subgraph patterns by searching and
counting subgraphs is computationally intensive since the number of
possible $k$-node combinations in the original graph increases
exponentially with $k$.
To address this problem, Kashtan et al.~\cite{Kashtan2004} propose to sample subgraphs using random edge sampling but this method scales poorly with the subgraph size and the results can be heavily biased.
Wernicke~\cite{Wernicke2006} proposes another approach (FANMOD) based on enumerating subgraph trees.
The latter relies on random node sampling, which is either not supported by most OSN APIs or is too resource intensive to be practical (with respect to cache misses and vacant user ID space~\cite{RibeiroINF12}).

\begin{figure*}[htb]
\center
\subfigure[$G$]{
\includegraphics[width=0.16\textwidth]{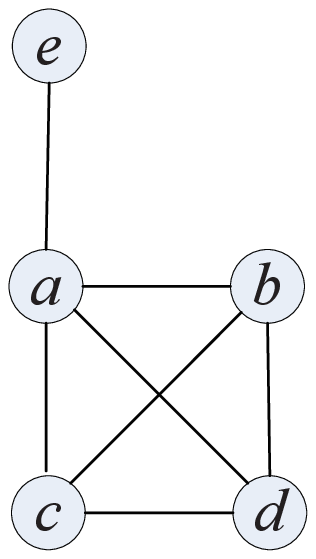}}
\subfigure[$G^{(2)}$]{
\includegraphics[width=0.35\textwidth]{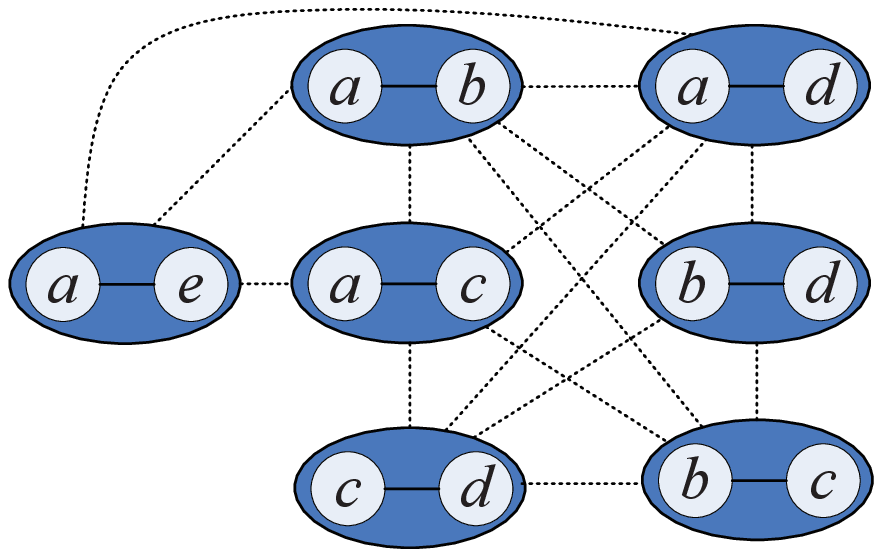}}
\subfigure[$G^{(3)}$]{
\includegraphics[width=0.5\textwidth]{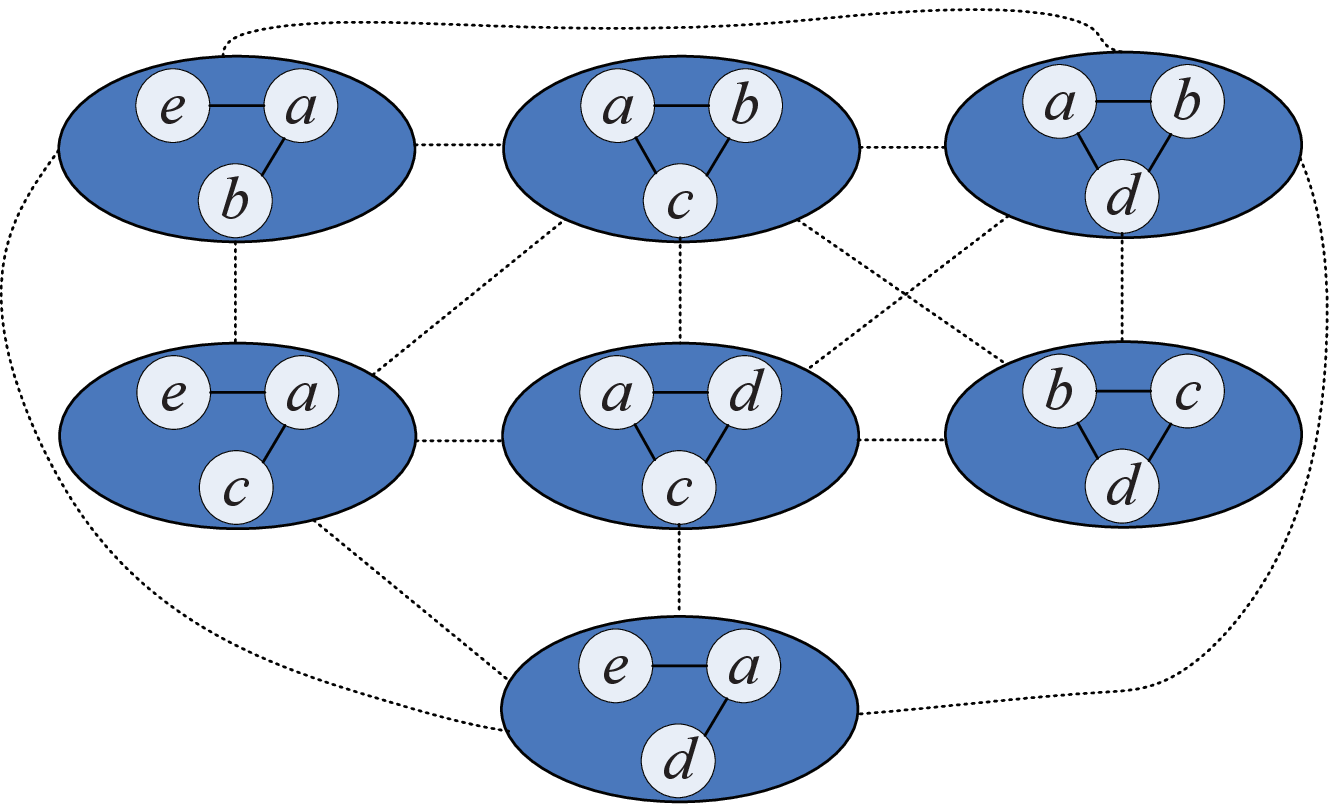}}
\subfigure[$G^{(4)}$]{
\includegraphics[width=0.26\textwidth]{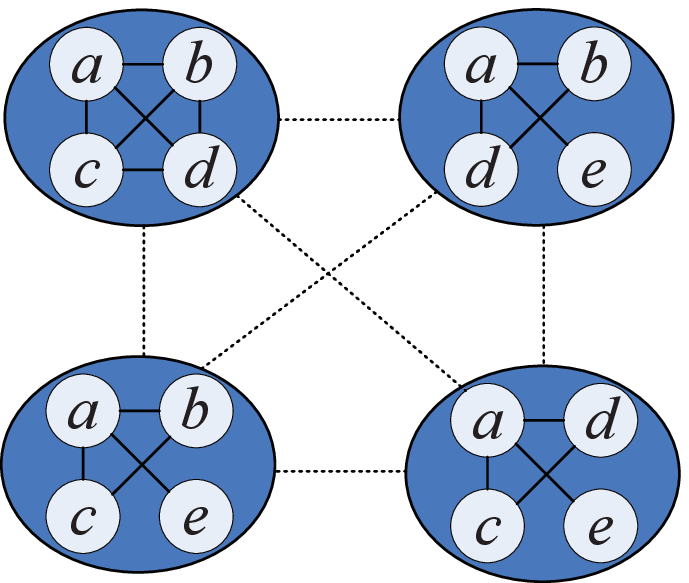}}
\caption{An example of $G$, and CIS relationship graphs $G^{(2)}$, $G^{(3)}$, and $G^{(4)}$.}\label{fig:exampleCISgraph}
\end{figure*}

Thus, it is paramount for such algorithms to query the graph on-the-fly without knowledge of the complete topology.
Moreover, such algorithms should output accurate high accuracy estimates of subgraph concentrations with as few queries as possible.
Recently,  Bhuiyan et al.~\cite{Bhuiyan2012} proposed a Metropolis-Hastings-based algorithm (henceforth denoted GUISE) that jointly estimates concentrations of 3-node, 4-node, and 5-node connected induced subgraphs (CISes).
However, the rejection sampling procedure of the Metropolis-Hastings random walk (MHRW) used in GUISE has received much criticism lately in the precise context of graph sampling~\cite{gjoka2011practical,RibeiroCDC12}.
Rejecting a sample incurs the {\em cost of sampling} without gathering information in exchange.
And, information-wise, the rejected samples may contain more information about the statistic of interest than the accepted samples.
Recently, Ribeiro and Towsley~\cite{RibeiroCDC12} shows that MHRW rejects information-rich samples when trying to estimate the degree distribution of a graph.
The end result is a sampling method that exhibits large estimation errors.
A new estimation method that does not suffer from the above mentioned problems is needed.

In this work we propose two algorithms to accurately estimate subgraph concentrations.
The first one, denoted PSRW, significantly improves upon GUISE in two fronts: (a) PSRW can estimate statistics of CISes of any size, in contrast to GUISE that is limited to jointly estimating 3-node, 4-node, and 5-node CISes; and (b) through careful design PSRW does not reject samples, making the estimation errors of PSRW significantly lower than those of GUISE.
Most importantly, PSRW is not an incremental improvement over GUISE but rather a different type of random walk that is designed without the need to reject samples, using the Horvitz-Thompson estimator~\cite{Ribeiro2010} to unbias the observations.
The second algorithm we propose, denoted Mix Subgraph Sampling or MSS, can jointly estimate CISes of sizes $k-1$, $k$, and $k+1$ for any $k \geq 4$, not only generalizing GUISE (GUISE 3-node, 4-node, and 5-node CISes is the special case $k=4$) but also achieving lower estimation errors.
One of the main differences between GUISE and PSRW or MSS is that our random walk is designed to sample nodes that are important for the CIS estimation and use {\em all of the gathered samples} in the estimation phase.

{\em Through simulations we show that both of our methods (PSRW and MSS) are significantly more accurate than GUISE for either the same number queried nodes or the same walk clock time (using a modern computer). The walk clock time is measured under the assumptions of access to a local database or a remote database (assuming 100 milliseconds of query response delay). }
Our methods represent the network as a {\em CIS relationship graph}, whose nodes are {\em connected and induced subgraphs} (CISes) of the original network.
Fig.~\ref{fig:exampleCISgraph} illustrates a CIS relationship graph for subgraphs of two, three, and four node subgraphs.
Our algorithms consist of running a random walk (RW) on the CIS relationship graph.
Besides its accuracy, our algorithms are lightweight.
They require little memory (more precisely, {\color{black} $O(k^2 + B)$} space where $k$ is the subgraph size and $B$ is the number of queried nodes) and, more importantly, {\em significantly fewer queries than the state-of-the-art methods to achieve the same accuracy}.
Note that building the completely CIS relationship graph is prohibitively expensive, both in terms of queries and memory.
Thus, our RW methods {\em do not require} the CIS relationship graph and there is no need to know the complete graph topology in advance, only the parts of the network already queried.
We also prove that a RW on the CIS relationship graph achieves asymptotically unbiased concentration estimates of the distinct subgraphs on the original network.

%from which we sample CISes. The node set of a CIS relationship graph
%consists of all CISes with the same number of nodes.
%An edge exists between two CISes if they differ in one and only one node in the original graph
%under study. A CIS relationship graph can be built {\em on-the-fly}, and there is no need
%to know the complete graph topology in advance.
%(2) Let $G$ be the associated undirected graph of the graph $G_d$ under study,
%which is generated by ignoring the directions and signs of edges in $G_d$.
%We show that a CIS relationship graph is connected when its associated graph $G$ is connected.
%Moveover, when $G$ is non-bipartite or contains a node with degree larger than two,
%each CIS relationship graph of $G$ is non-biaprtite.
%(3) Given the connected and non-bipartite properties,
%we propose several sampling methods to estimate subgraph class concentrations by
%applying graph crawling methods such as random walk (RW) to CIS relationship graphs.
%(4) We show that our concentration estimates are asymptotically unbiased.
%(5) Simulation results using various large graph datasets show that our methods significantly
%reduce the number of samples required to achieve the same estimation accuracy of
%state-of-the-art methods

This paper is organized as follows. The problem formulation is presented in Section~\ref{sec:problem}. Section~\ref{sec:methods} presents methods for estimating subgraph class concentrations. The performance evaluation and testing results are presented in Section~\ref{sec:results}. Section~\ref{sec:application} presents applications of our methods to two real OSN websites.
Section~\ref{sec:related} summarizes related work. Concluding remarks then follow.

%!TEX root = samplingmotifs.tex
\section{Problem Formulation} \label{sec:problem}
Let $G=(V, E, L)$ be a labeled undirected graph where $V$ is the set of nodes,
$E$ be a set of pairs of $V$ (edges),
and $L$ is a set of labels $l_{i,j}$ associated with edges $(i,j)\in E$.
If $G$ represents a directed network, then we attach a label to each edge that indicates the direction of the edge ($\to$, $\leftarrow$, $\leftrightarrow$).
Edges may have other labels too, for instance, in a signed network, edges have positive or negative labels.

An induced subgraph of $G$, $G'=(V', E', L')$, $V'\subset V$, $E'\subset E$ and $L'\subset L$, is a subgraph whose edges and associated labels are all in $G$, i.e. $E'=\{(i,j): i,j \in V', (i,j)\in E\}$,
$L'=\{l_{i,j}: i,j \in V', (i,j)\in E\}$.
We define $C^{(k)}$ as the set of all connected and induced subgraphs (CISes) with $k$ nodes in $G$.
Then we partition $C^{(k)}$ into $T_k$ equivalence classes $C_1^{(k)}, \ldots, C_{T_k}^{(k)}$ where CISes within $C_i^{(k)}$ are isomorphic and any pair of CISes from $C_i^{(k)}$, $C_j^{(k)}$, $i\ne j$ are not isomorphic.
To illustrate our notation, in what follows we present some simple examples.
Fig.~\ref{fig:34nodeclasses} (a) shows all three-node motifs of any unlabeled undirected network.
When $G$ is an unlabeled undirected network,
then the number of three-node motifs is $T_3=2$,
and $C_1^{(3)}$ and $C_2^{(3)}$ are the sets of CISes in $G$ isomorphic to motifs 1 and 2 in Fig.~\ref{fig:34nodeclasses} (a) respectively.
Fig.~\ref{fig:34nodeclasses} (b) shows all
four-node motifs of any unlabeled undirected network,
in this case $T_4=6$.
Fig.~\ref{fig:3nodeclasses} shows all three-node
motifs of any directed network, in this case $T_3=13$.
Fig.~\ref{fig:signclasses3nodes} shows all motifs when
$G$ is of any signed network, in this case $T_3=7$.

\begin{figure}[htb]
\center
\subfigure[3-node]{
\includegraphics[width=0.18\textwidth]{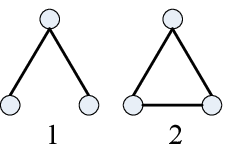}}
\subfigure[4-node]{
\includegraphics[width=0.6\textwidth]{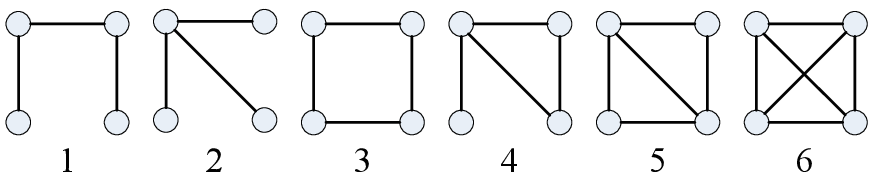}}
\caption{All classes of three-node and four-node undirected and connected subgraphs
         (The numbers are the subgraph class IDs).}
\label{fig:34nodeclasses}
\end{figure}

\begin{figure}[htb]
\begin{center}
\includegraphics[width=0.68\textwidth]{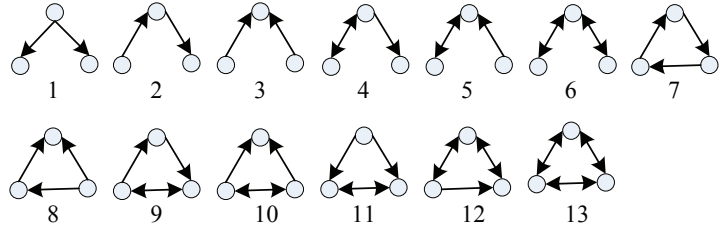}
\caption{All classes of three-node directed and connected subgraphs.}
\label{fig:3nodeclasses}
\end{center}
\end{figure}

\begin{figure}[htb]
\begin{center}
\includegraphics[width=0.65\textwidth]{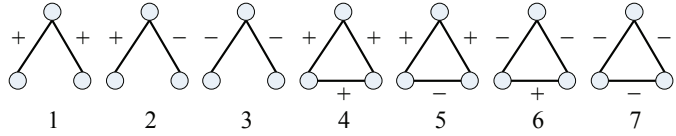}
\caption{All classes of three-node signed and undirected subgraphs.}
\label{fig:signclasses3nodes}
\end{center}
\end{figure}

%Figure~\ref{fig:34nodeclasses} shows all possible three-node and
%four-node subgraph classes of any unlabeled undirected network.
%%{\color{red} four-node complete graph} with $T_3 = 2$ and $T_4 = 6$.
%Fig.~\ref{fig:3nodeclasses} shows all possible three-node
%subgraph classes of any directed network, in this case $T_3=13$.
%Fig.~\ref{fig:signclasses3nodes} shows all possible three-node subgraphs of any signed network.
The concentration of $C_i^{(k)}$ is
\begin{equation*}%\label{eq:concentration}
\omega_i^{(k)}=\frac{|C_i^{(k)}|}{|C^{(k)}|}, \qquad 1\le i\le T_k,
\end{equation*}
where $|C_i^{(k)}|$ is the number of CISes in $C_i^{(k)}$.
For example, we have $|C^{(3)}|=7$ for $G$ in Fig.~\ref{fig:exampleCISgraph},
and $C_1^{(3)}$  includes three CISes:
1) CIS consisting of $a$, $b$, and $e$;
2) CIS consisting of $a$, $c$, and $e$;
3) CIS consisting of $a$, $d$, and $e$.
Then we have $\omega_1^{(3)}=3/7$.
In this work we are interested in accurately estimating $\omega_i^{(k)}$ by querying a small number of nodes.
Note that the network topology is unknown to us and we are only given one initial connected subgraph of size $k > 1$ in $G$ to bootstrap our algorithm.

\section{\textbf{Connected and Induced Subgraph Sampling Methods}} \label{sec:methods}
In this section we first introduce the notion of a ``{\em CIS relationship graph}".
Then we propose two subgraph sampling methods based on random walks (RWs) on CIS relationship graphs
to estimate the concentrations of subgraph classes of a specific size $k$.
Finally, we propose a sampling method to solve the problem of measuring the concentrations of subgraph classes of sizes $k-1$, $k$, and $k+1$ simultaneously,
where the special case $k=4$ is equivalent to the problem studied in Bhuiyan et al.~\cite{Bhuiyan2012}.
A list of notations used is shown in Table~\ref{tab:notations}.
\begin{table}[htb]
\centering
\tbl{Table of notations}{
\begin{tabular}{||c|l||} \hline
$G=(V, E, L)$&graph under study\\ \hline
%\multirow{2}{*}{$G=(V, E)$}&associated undirected graph of $G_d$ generated\\
%&by ignoring the directions of edges in $G_d$\\ \hline
$d(v), v\in V$& degree of node $v$ in $G$\\ \hline
%$d_{max}$&the maximum degree of graph $G$\\ \hline
$G^{(k)}=(C^{(k)},R^{(k)})$& $k$-node CIS relationship graph\\ \hline
$V(s), s\in C^{(k)}$&set of nodes for the $k$-node CIS $s$ \\ \hline
$E(s), s\in C^{(k)}$&set of edges for the $k$-node CIS $s$\\ \hline
$N(s), s\in C^{(k)}$&$N(s)\subset V$, set of nodes in $V\backslash V(s)$ which are connected to nodes in the $k$-node CIS $s$\\ \hline
$X(s), s\in C^{(k)}$&$X(s)\subset C^{(k)}$, neighbors of $k$-node CIS $s$ in graph $G^{(k)}$\\ \hline
$d^{(k)}(s), s\in C^{(k)}$&  degree of the $k$-node CIS $s$ in $G^{(k)}$\\ \hline
$C(s)$&subgraph class of the CIS $s$\\ \hline
$C_i^{(k)}$& the $i$-th $k$-node subgraph class in $G$\\ \hline
$T_k$&number of $k$-node subgraph classes\\ \hline
$\omega_i^{(k)}$&concentration of subgraph class $C_i^{(k)}$\\ \hline
$I^{(k)}(x), x\in C^{(k+1)}$&number of $k$-node CISes contained in $(k+1)$-node CIS $x$ \\ \hline
$C^{(k-1)}(s)$, $s\in C^{(k)}$ & the set of $(k-1)$-node CISes contained in the CIS $s$ \\ \hline
$O^{(k)}(s')$, $s'\in C^{(k-1)}$ & the set of $k$-node CISes that contain the CIS $s'$ \\ \hline
$B$&number of sampled CISes\\ \hline
$B^*$&number of queries\\ \hline
\end{tabular}}
\label{tab:notations}
\end{table}

\subsection{\textbf{CIS relationship graph}}
Let $C^{(k)}$ ($2\le k< |V|$) denote the set of all $k$-node CISes in $G$.
Two different $k$-node CISes $s_1$ and $s_2$ in $C^{(k)}$ are connected
if and only if they have exactly $k\!-\!1$ nodes in common.
Formally, the undirected graph $G^{(k)}=(C^{(k)},R^{(k)})$ represents the {\em CIS relationships} between all
$k$-node CISes in $G$, where $C^{(k)}$ and $R^{(k)}$ are the node and edge sets
for graph $G^{(k)}$ respectively.
When two $k$-node CISes $s_i$ and $s_j$ in $C^{(k)}$ differ in one and only one node, there exists an edge $(s_i, s_j)$ in graph $G^{(k)}$.
We say that two $k$-node CISes $s_i$ and $s_j$ are reachable if and only if there is at least one path between them in graph $G^{(k)}$, and $G^{(k)}$ is connected if and only if every pair of subgraphs in $C^{(k)}$ is reachable.
Fig.~\ref{fig:exampleCISgraph} shows an example of an original unlabeled graph $G$ and its associated CIS graphs $G^{(2)}$, $G^{(3)}$, and $G^{(4)}$.
Then we have the following theorems.
\begin{theorem}\label{theorem:connected}
If the graph $G$ is connected, then the $k$-node CIS  graph $G^{(k)}$ is connected, $2\le k<|V|$.
\hfill $\square$
\end{theorem}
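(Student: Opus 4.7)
The plan is to prove the result by induction on $k$. For the base case $k=2$, the vertices of $G^{(2)}$ are the edges of $G$, and two edges are adjacent in $G^{(2)}$ iff they share an endpoint. Given any two edges $(a,b)$ and $(c,d)$, a path $a = p_0, p_1, \ldots, p_r = c$ in $G$ (which exists by connectivity of $G$) yields the walk $(a,b), (p_0, p_1), (p_1, p_2), \ldots, (p_{r-1}, p_r), (c,d)$ in $G^{(2)}$, since consecutive edges in this walk share a vertex.

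For the inductive step ($k \ge 3$, $k < |V|$), I would assume $G^{(k-1)}$ is connected. Given two $k$-node CISes $s_1, s_2$, I would pick a leaf $u_i$ of a spanning tree of $s_i$, so that the induced subgraph $s_i'$ on $V(s_i) \backslash \{u_i\}$ is a $(k-1)$-node CIS. The induction hypothesis then supplies a path $s_1' = t_0' \to t_1' \to \cdots \to t_m' = s_2'$ in $G^{(k-1)}$, which I would then ``lift'' to a walk in $G^{(k)}$.

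For the lift, for each edge $t_{i-1}' \to t_i'$ of the lower-level path let $b_i$ be the unique node in $V(t_i') \backslash V(t_{i-1}')$; since $t_i'$ is connected and $V(t_{i-1}') \cap V(t_i')$ has $k-2 \ge 1$ nodes, $b_i$ has a neighbor there. I would then define $t_i$ (for $0 \le i < m$) as the induced subgraph on $V(t_i') \cup \{b_{i+1}\}$, set $t_m = s_2$, and prepend $s_1$ to the walk to handle the left boundary. Each $t_i$ is a genuine $k$-node CIS because $V(t_i')$ is connected and the newly added node has a neighbor in $V(t_i')$.

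The hard part will be a short case analysis verifying that consecutive lifts share exactly $k-1$ nodes and are therefore adjacent in $G^{(k)}$ (or else coincide, in which case the walk just repeats). Writing $Y = V(t_i') \cap V(t_{i+1}')$, $V(t_i') = Y \cup \{a\}$, and $V(t_{i+1}') = Y \cup \{b_{i+1}\}$, one finds $V(t_i) \cap V(t_{i+1}) = Y \cup \{b_{i+1}\} \cup (\{a\} \cap \{b_{i+2}\})$, whose size is $k-1$ unless $a = b_{i+2}$ (in which case $t_i = t_{i+1}$). Analogous one-line checks handle the boundary transitions $s_1 \to t_0$ and $t_{m-1} \to t_m$, completing the inductive step.
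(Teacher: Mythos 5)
Your proof is correct and takes essentially the same route as the paper's: induction on $k$ with base case $G^{(2)}$, extraction of a $(k-1)$-node CIS from each $k$-node CIS (the paper via a BFS-tree lemma, you via deleting a leaf of a spanning tree), and lifting of the path in $G^{(k-1)}$ to $G^{(k)}$ by taking the union of the node sets of consecutive CISes along it. Your explicit ``adjacent or coincide'' case analysis at each lifted step is, if anything, more careful than the paper's, which dismisses those verifications as easy.
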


\begin{theorem}\label{theorem:bipartitedegree}
If the graph $G$ is connected and either non-bipartite, or contains a node with degree larger than two, all $k$-node CIS graphs $G^{(k)}$ are non-bipartite, where $2\le k<|V|$.
\hfill $\square$
\end{theorem}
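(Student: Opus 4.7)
The plan is to exhibit an odd cycle in $G^{(k)}$ under each hypothesis, since a graph is bipartite if and only if it contains no odd cycle. I would split the argument into two cases according to which hypothesis is assumed.

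First, if every vertex of $G$ has degree at most $2$, then since $G$ is connected it is a path or a cycle, and the non-bipartite hypothesis forces $G$ to be an odd cycle $C_n$ with $n = 2\ell+1$. The $k$-node CISes of $C_n$ are its $n$ cyclically consecutive length-$k$ arcs, and two such arcs share $k-1$ vertices precisely when they are cyclically adjacent; hence $G^{(k)} \cong C_n$ is itself an odd cycle and is non-bipartite.

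Otherwise $G$ contains a vertex $v$ with $d(v) \ge 3$; let $u_1, u_2, u_3$ be three distinct neighbors of $v$. I would exhibit a triangle in $G^{(k)}$. For $k = 2$, the three 2-node CISes $\{v,u_1\},\{v,u_2\},\{v,u_3\}$ pairwise share the $(k-1)$-set $\{v\}$, so they form a triangle in $G^{(2)}$. For $k \ge 3$, I claim there exists a connected induced subgraph $S \subset V$ with $|S| = k-2$ whose external boundary $N(S) := \{w \in V \setminus S : w$ is adjacent to $S$ in $G\}$ has size at least $3$. Given such $S$ and three distinct $x_1, x_2, x_3 \in N(S)$, set $T_{ij} := S \cup \{x_i, x_j\}$. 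Each $T_{ij}$ is a connected induced $k$-node subgraph because $S$ is connected and each $x_i$ is adjacent to $S$, and the three sets pairwise intersect in a $(k-1)$-subset of the form $S \cup \{x_l\}$; so $T_{12}, T_{13}, T_{23}$ form a triangle in $G^{(k)}$.

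The crux is the existence of $S$, which I would prove by greedy boundary-preserving induction. Begin with $S_0 = \{v\}$, so $|N(S_0)| = d(v) \ge 3$, and extend $S$ one vertex at a time by adding some $w \in N(S)$. Moving from $S$ to $S \cup \{w\}$ loses at most the vertex $w$ from the boundary and may add elements of $N(w) \setminus (S \cup N(S))$, so when $|N(S)| \ge 4$ the invariant $|N(S)| \ge 3$ is automatic after the step. When $|N(S)| = 3$ and $|S| < k-2$, some $w \in N(S)$ must have $N(w) \not\subseteq S \cup N(S)$: otherwise the connectedness of $G$ forces $V = S \cup N(S)$ and hence $|V| = |S| + 3 \le k$, contradicting $|V| \ge k+1$. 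Such a $w$ contributes a fresh boundary vertex and preserves $|N(S \cup \{w\})| \ge 3$. Iterating until $|S| = k-2$ yields the desired $S$. I expect this existence step to be the main obstacle, because on instances where most of $G$ hangs behind a single neighbor of $v$ one cannot simply choose $S$ disjoint from $N(v)$; the invariant-driven construction is what lets $S$ absorb some neighbors of $v$ while still exposing three external exits.
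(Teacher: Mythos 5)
Your proof is correct and takes essentially the same route as the paper: both certify non-bipartiteness by an odd cycle in $G^{(k)}$, obtained as a triangle of three $k$-node CISes pairwise sharing $k-1$ nodes when some vertex has degree at least three (the paper produces this triangle by deleting leaves from a tree containing $v$ and three of its neighbors, built via its BFS lemma, while you grow a connected $(k-2)$-node core with boundary at least three and attach pairs of boundary vertices -- the same configuration, with your greedy invariant replacing the paper's Lemma~A.1), and as the cycle of consecutive $k$-node arcs when $G$ is an odd cycle. The only slip is the claim $G^{(k)}\cong C_n$ in the cycle case, which fails for $k=n-1$ (there $G^{(k)}\cong K_n$), but the cyclically consecutive arcs still yield an odd closed walk, so the conclusion is unaffected.
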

The proofs of all Theorems in this section are included in the Appendix for completeness.

\noindent
{\bf Remark:}
Theorems~\ref{theorem:bipartitedegree} states
that $G^{(k)}$ is non-bipartite for most connected graph $G$.
%An example is shown in Fig.~\ref{fig:examplebipartite}, where $G$ is bipartite but its associated graph $G^{(3)}$ is non-bipartite.
%The connected and non-bipartite  properties are useful
%for removing sampling biases introduced by RW sampling on $G^{(k)}$.
Connectedness is critical for removing biases from RW sampling of $G^{(k)}$.
Biases introduced through sampling a bipartite graph using a lazy RW are easily removed.
Biases introduced though sampling via a classical RW can only be removed if the graph is non-bipartite.
%\begin{figure}[htb]
%\center
%\subfigure[$G$]{
%\includegraphics[width=0.11\textwidth]{examplebipartite}}
%\subfigure[$G^{(3)}$]{
%\includegraphics[width=0.22\textwidth]{examplebipartiteS3}}
%\caption{A bipartite graph $G$,  its $G^{(3)}$ is connected and non-bipartite.}
%\label{fig:examplebipartite}
%\end{figure}

\subsection{\textbf{Subgraph random walk (SRW)}}
In this subsection, we propose a sampling method, {\em subgraph random walk} (SRW),
and apply it over graph $G^{(k)}$, $2\le k<|V|$
to estimate concentrations of subgraph classes.
SRW can be viewed as a regular RW over graph $G^{(k)}$.
We define $X(s)\subset C^{(k)}$ as the set of neighbors of the current $k$-node CIS $s$ sampled in $G^{(k)}$.
From an initial $k$-node CIS, SRW randomly selects a CIS from $X(s)$ as the next-hop CIS.
SRW moves to this neighbor CIS and repeats the process.
Clearly $X(s)$ cannot be directly obtained since $G^{(k)}$ is not available in advance.
In what follows we first discuss the number of queries required to compute $X(s)$ at each step.
We observe that $X(s)$ can be obtained using at most $k$ queries, which makes SRW both practical and efficient.
We then show in detail how to apply SRW to sample $k$-node CISes from the original graph $G$ in detail.
Finally we present our method for estimating concentrations of subgraph classes based on sampled CISes.

First, we present a critical observation for analyzing the performance of the SRW:
\emph{Querying nodes in a $k$-node CIS $s$ is suffices to obtain $X(s)$, i.e., the neighbors of $s$ in $G^{(k)}$.}
Denote by $V(s)$ the set of
nodes in $s$ and $E(s)$ the set of edges in $s$.
Denote by $N(s)$ the set of nodes in $V\backslash V(s)$ connected to nodes in $V(s)$.
Let $E^{(N)}(s)$ denote the set of edges between nodes in $N(s)$ and nodes in $V(s)$.
For example, when $s$ is the 3-node CIS consisting of nodes $b$, $c$, $d$ shown in
Fig.~\ref{fig:exampleCISgraph} (c), we have
$d^{(3)}(s)=3$, $V(s)=\{b,c,d\}$, $E(s)=\{(b,c), (b,d), (c,d)\}$, $N(s)=\{a\}$,
$E^{(N)}(s)=\{(a,b), (a,c), (a,d)\}$,
and
$X(s)$ includes three CISes:  the CIS consisting of nodes
1) $a$, $b$, and $c$;
2) $a$, $b$, and $d$; as well as
3) $a$, $c$, and $d$.
Clearly a neighbor of $s$ in $G^{(k)}$ corresponds to a subgraph that
includes $k-1$ nodes in $s$ and one node in $N(s)$.
For each $(k-1)$-node set $\{v_1,\ldots,v_{k-1}\}\subset V(s)$ and each node $u\in N(s)$,
the induced subgraph $s'=(V(s'), E(s'))$ of these $k$ nodes,
i.e., $V(s')=\{v_1, \ldots, v_{k-1}, u\}$ and $E(s')=\{(u,v): u,v\in V(s') \text{ and } (u,v)\in E\}$,
is a neighbor of $s$ in $G^{(k)}$ when $s'$ is a connected graph.
Note that we can obtain $E(s')$ without querying any node in $N(s)$,
since $E(s')=\{(u,v): u,v\in V(s') \text{ and } (u,v)\in E\}=\{(u,v): u,v\in V(s') \text{ and } (u,v)\in E(s)\cup E^{(N)}(s)\}$.
Therefore $X(s)$ can be computed based on $V(s)$, $N(s)$, $E(s)$, $E^{(N)}(s)$, which are all obtained by querying nodes in $s$.

\begin{figure}[htb]
\begin{center}
\includegraphics[width=0.8\textwidth]{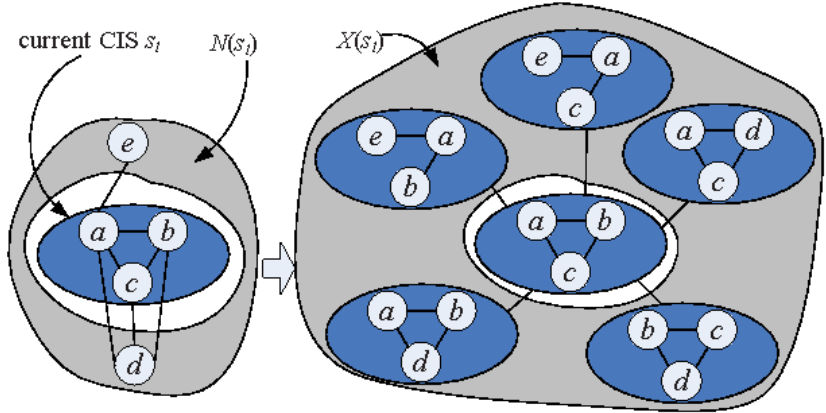}
\caption{An example of applying a SRW to graph $G^{(3)}$.}\label{fig:RWexample}
\end{center}
\end{figure}

\begin{algorithm}
\SetKwFunction{randomNeighbor}{randomNeighbor}
\SetKwFunction{generateGraph}{generateGraph}
\SetKwFunction{connectedGraph}{connectedGraph}
\SetKwInOut{Input}{input}
\SetKwInOut{Output}{output}
\tcc{\footnotesize
$s\in C^{(k)}$.
$N(s)$ is the set of nodes in $V\backslash V(s)$ connected to nodes in $V(s)$.
$E^{(N)}(s)$ is the set of edges between nodes in $N(s)$ and nodes in $V(s)$.}
\Input{$k$-node CIS $s=(V(s), E(s))$, $N(s)$, $E^{(N)}(s)$}
\tcc{\footnotesize $X(s)\subset C^{(k)}$ is the set of neighbors of $s$ in $G^{(k)}$.}
\Output{$X(s)$}
\BlankLine
$X(s)=\{\}$\;
\ForEach {$\{v_1, \ldots, v_{k-1}\}$ $\subset V(s)$}
{
    \ForEach {$u\in N(s)$}
    {
        \tcc{\footnotesize $\generateGraph(\{v_1, \ldots, v_{k-1}\}, u, E(s), E^{(N)}(s))$ returns a graph $s'=(V(s'), E(s'))$,
        whose node set $V(s')=\{v_1, \ldots, v_{k-1}, u\}$,
        and edge set $E(s')=\{(u,v): u,v\in V(s') \text{ and } (u,v)\in E(s)\cup E^{(N)}(s)\}$}
        $s'=\generateGraph(v_1, \ldots, v_{k-1}, u, E(s), E^{(N)}(s))$\;
        \tcc{\footnotesize $\connectedGraph(s')$ returns "True" when $s'$ is a connected graph, and "False" otherwise.}
        \If {$\connectedGraph(s')$}
        {$X(s)=X(s)\cup \{s'\}$}
    }
}
\caption{The pseudo-code of computing $X(s)$. \label{alg:X(s)}}
\end{algorithm}

The SRW algorithm proceeds in steps.
%As shown in Fig.~\ref{fig:RWexample},
Consider the $i$-th step, $i \ge 1$, and the current node is $s_i \! \in \! C^{(k)}$.
%at each step $i\ge 1$,
SRW first computes $X(s_i)$, and then selects a CIS randomly from $X(s_i)$ as the next
CIS to visit.
For example, as shown in Fig.~\ref{fig:RWexample}, suppose that the current sampled CIS $s_i$ consists of nodes $a$, $b$, and $c$.
After querying $a$, $b$, and $c$, we obtain $N(s_i)=\{e, d\}$ and the edges between nodes in $N(s_i)$ and nodes in $s_i$.
Then our SRW algorithm computes $X(s_i)$ (i.e., the five CISes connecting to $s_i$ shown in Fig.~\ref{fig:RWexample}),
and randomly select a new CIS from $X(s_i)$ as the next CIS $s_{i+1}$ to sample.
The pseudo-code of computing $X(s_i)$ is shown in Algorithm~\ref{alg:X(s)}.
As mentioned earlier,
$X(s_i)$ is computed without querying any node in $N(s_i)$.
Moreover, since $V(s_{i+1})$ differs from $V(s_i)$ in one and only one node,
SRW only needs to \textbf{\emph{query one node}} in the original graph
$G$ at each step.
Let $d^{(k)}(s)$ be the degree of a $k$-node CIS $s$ in graph $G^{(k)}$, that is the number
of $k$-node CISes connected to $s$.
Formally, SRW then can be modeled as a Markov chain with transition matrix
$\bP^{(k)}=[P^{(k)}_{x,y}]$, $x,y\in C^{(k)}$, where $P^{(k)}_{x,y}$ is defined
as the probability that CIS $y$ is selected as the next sampled
$k$-node CIS given that the current $k$-node CIS is $x$. $P^{(k)}_{x,y}$ is computed as
\begin{equation*}%\label{eq:VSSMatrix}
P^{(k)}_{x,y}=\left\{
\begin{array}{ll}
  \frac{1}{d^{(k)}(x)}, & x\in C^{(k)}, y\in X(x), \\
  0, & \text{otherwise.}
\end{array}
\right.
\end{equation*}
The stationary distribution
$\bpi^{(k)}=\left(\pi^{(k)}(s): s\in C^{(k)}\right)$ of this Markov chain is
\begin{equation*}%\label{eq:vsspi}
\pi^{(k)}(s)=\frac{d^{(k)}(s)}{\sum_{t\in C^{(k)}} d^{(k)} (t)}.
\end{equation*}
SRW is a regular RW over the undirected graph $G^{(k)}$, and
we have the following theorem from~\cite{Lovasz1993,Ribeiro2010}.

\begin{theorem} \label{theorem:srwstationary}
If graph $G^{(k)}$ ($2\le k<|V|$) is non-bipartite and connected, the stationary distribution
for the SRW to be  at a $k$-node CIS $s\in C^{(k)}$ converges to
$\bpi^{(k)}=\left(\pi^{(k)}(s): s\in C^{(k)}\right)$. The probabilities of
a SRW sampling edges in $E^{(k)}$ are equal when the SRW reaches the steady state.
\hfill $\square$
\end{theorem}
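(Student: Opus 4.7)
The statement is the classical convergence result for a simple random walk on an undirected graph, specialized here to $G^{(k)}$. The plan is to treat SRW as a Markov chain on the finite state space $C^{(k)}$ with transition matrix $\bP^{(k)}$ as defined in the text, verify the two standard hypotheses (irreducibility and aperiodicity) using the assumptions, and then check that the claimed $\bpi^{(k)}$ satisfies detailed balance, which immediately yields both parts of the conclusion.

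First I would observe that because $G^{(k)}$ is connected (by Theorem~\ref{theorem:connected} or, in the hypothesis here, directly assumed), from any $x\in C^{(k)}$ every $y\in C^{(k)}$ is reachable by a sequence of transitions with positive probability, so the chain is irreducible. Because $G^{(k)}$ is assumed non-bipartite (which is the ``typical'' situation, as guaranteed by Theorem~\ref{theorem:bipartitedegree}), there exists an odd cycle in $G^{(k)}$; combined with any even cycle obtained by traversing an edge forward and backward, the greatest common divisor of return-time lengths is $1$, so the chain is aperiodic. The finite-state, irreducible, aperiodic Markov chain then has a unique stationary distribution to which the law of $s_i$ converges, regardless of the starting CIS.

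Next I would verify that the candidate $\pi^{(k)}(s)=d^{(k)}(s)/\sum_{t\in C^{(k)}} d^{(k)}(t)$ is this stationary distribution by checking detailed balance: for any edge $(x,y)\in R^{(k)}$,
\begin{equation*}
\pi^{(k)}(x)\,P^{(k)}_{x,y}=\frac{d^{(k)}(x)}{\sum_t d^{(k)}(t)}\cdot\frac{1}{d^{(k)}(x)}=\frac{1}{\sum_t d^{(k)}(t)}=\pi^{(k)}(y)\,P^{(k)}_{y,x},
\end{equation*}
and for non-edges both sides vanish. Summing over $x$ gives $\pi^{(k)}=\pi^{(k)}\bP^{(k)}$, which together with uniqueness from the previous step identifies the limit.

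The second assertion then drops out of the same calculation. At stationarity, the probability that a single SRW step traverses a particular directed edge $x\to y$ equals $\pi^{(k)}(x)P^{(k)}_{x,y}=1/\sum_t d^{(k)}(t)$, independent of the choice of edge; by symmetry the same holds for $y\to x$, so every undirected edge in $R^{(k)}$ is sampled with the common value $2/\sum_t d^{(k)}(t)$. There is no genuine obstacle in the argument since the two nontrivial structural facts (connectedness and non-bipartiteness of $G^{(k)}$) are either assumed in the hypothesis or already established by the preceding theorems; the only care needed is to cite them cleanly so the appeal to the fundamental theorem for finite Markov chains is justified.
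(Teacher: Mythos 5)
Your proof is correct: irreducibility from connectedness, aperiodicity from an odd cycle together with the even back-and-forth traversal, detailed balance to identify $\pi^{(k)}(s)=d^{(k)}(s)/\sum_{t}d^{(k)}(t)$, and the observation that at stationarity each edge transition occurs with the common probability $1/\sum_t d^{(k)}(t)$ are exactly the standard ingredients. The paper does not prove this theorem itself --- it invokes it as a known result for random walks on undirected graphs, citing Lov\'asz's survey and Ribeiro--Towsley --- so your write-up is simply the self-contained textbook proof of the cited fact, and it covers both assertions of the statement (including the equal edge-sampling probabilities, where the theorem's $E^{(k)}$ should be read as the edge set $R^{(k)}$ of $G^{(k)}$).
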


\noindent
{\bf Remark:}
As mentioned earlier, in most practical applications the connected non-bipartite assumption over $G^{(k)}$ only implies that the original graph $G$ must have at least one node with degree three or larger and be connected.

%\begin{figure}[htb]
%\begin{center}
%\includegraphics[width=0.3\textwidth]{RWexample}
%\caption{An example of applying a SRW over graph $G^{(3)}$.}\label{fig:RWexample}
%\end{center}
%\end{figure}

Let $C(s)$ denote the subgraph class of a CIS $s$.
$C(s)$ can be easily obtained by using the NAUTY algorithm~\cite{McKay1981,McKay2009}.
Define $\mathbf{1}(\mathcal{P})$ as the indicator function that equals
one when the predicate $\mathcal{P}$ is true,
and zero otherwise.
Let $s_j$, $j>0$, be the $k$-node CIS sampled by the SRW at step $j$.
Using the CISes visited by a SRW after $B > 1$ steps, we use the Horvitz-Thompson estimator~\cite{Ribeiro2010}
to estimate the concentration of subgraph class $C_i^{(k)}$ as:
\begin{equation}
\hat\omega_i^{(k)}
   =\frac{1}{L}\sum_{j=1}^B \frac{\mathbf{1}(C(s_j)= C_i^{(k)})}{d^{(k)}(s_j)},
       \quad \quad 1\le i\le T_k,
\label{eq:estimatorofomegaVSS}
\end{equation}
where $L=\sum_{j=1}^B \frac{1}{d^{(k)}(s_j)}$.

\begin{theorem}\label{theorem:nodeestimatorunbiased}
If $G^{(k)}$ ($2\le k<|V|$) is non-bipartite and connected,
then $\hat\omega_i^{(k)}$ ($1\le i\le T_k$) in Eq.~(\ref{eq:estimatorofomegaVSS})
is an asymptotically unbiased estimator of $\omega_i^{(k)}$.
\hfill $\square$
\end{theorem}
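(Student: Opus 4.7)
The plan is to apply the ergodic theorem for Markov chains on a finite state space to both the numerator and the normalizing factor $L$ in \eqref{eq:estimatorofomegaVSS}, and then combine them via the continuous mapping theorem. By Theorem~\ref{theorem:srwstationary}, under the stated hypotheses the chain $\{s_j\}$ is irreducible, aperiodic, and positive recurrent on the finite state space $C^{(k)}$, with unique stationary distribution $\pi^{(k)}(s)=d^{(k)}(s)/Z$ where $Z=\sum_{t\in C^{(k)}} d^{(k)}(t)$. Hence for any bounded function $f:C^{(k)}\to\mathbb{R}$, the time average $\frac{1}{B}\sum_{j=1}^B f(s_j)$ converges almost surely to $\sum_s \pi^{(k)}(s) f(s)$ as $B\to\infty$.

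First, I would take $f(s) = \mathbf{1}(C(s)=C_i^{(k)})/d^{(k)}(s)$ and compute
\begin{equation*}
\sum_{s\in C^{(k)}} \pi^{(k)}(s)\,\frac{\mathbf{1}(C(s)=C_i^{(k)})}{d^{(k)}(s)}
= \sum_{s\in C^{(k)}} \frac{d^{(k)}(s)/Z}{d^{(k)}(s)}\mathbf{1}(C(s)=C_i^{(k)})
= \frac{|C_i^{(k)}|}{Z}.
\end{equation*}
Next, taking $f(s)=1/d^{(k)}(s)$ gives
$\frac{L}{B} \to \sum_{s} \pi^{(k)}(s)/d^{(k)}(s) = |C^{(k)}|/Z$ almost surely. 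Dividing the two limits and applying the continuous mapping theorem (the denominator's limit is strictly positive, so division is continuous at that point) yields
\begin{equation*}
\hat\omega_i^{(k)} \;\xrightarrow[B\to\infty]{\text{a.s.}}\; \frac{|C_i^{(k)}|/Z}{|C^{(k)}|/Z} = \frac{|C_i^{(k)}|}{|C^{(k)}|} = \omega_i^{(k)}.
\end{equation*}

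Finally, to move from almost-sure convergence to asymptotic unbiasedness in the mean, I would invoke the bounded convergence theorem: since $d^{(k)}(s)\ge 1$ for every sampled $s$, the numerator of $\hat\omega_i^{(k)}$ is bounded above by $L$, so $\hat\omega_i^{(k)}\in[0,1]$ for every $B$. Thus $\lim_{B\to\infty} \mathbb{E}[\hat\omega_i^{(k)}] = \omega_i^{(k)}$.

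The main obstacle I anticipate is the ratio form: neither the numerator nor $L$ is an unbiased estimator for any fixed $B$, because the chain has not reached stationarity, and Horvitz–Thompson-style ratio estimators are only \emph{asymptotically} unbiased. The cleanest handling is to argue convergence of numerator and denominator separately under the ergodic theorem and then use continuity of division together with boundedness of $\hat\omega_i^{(k)}$ in $[0,1]$ to upgrade a.s. convergence to convergence of expectations; nothing more delicate (such as a CLT or quantitative mixing bound) is needed for this statement.
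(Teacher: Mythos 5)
Your proposal is correct and follows essentially the same route as the paper: apply the Markov-chain strong law of large numbers (the paper's Lemma~\ref{lemma:nodeestimator}, whose extra normalization factor cancels in the ratio) separately to the numerator and to $L/B$, and then divide the two almost-sure limits. The only addition is your explicit bounded-convergence step upgrading almost-sure convergence to convergence of $\mathbb{E}[\hat\omega_i^{(k)}]$, which the paper leaves implicit in its ``we can easily find'' conclusion.
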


\noindent
{\bf Remark:}
Theorem \ref{theorem:nodeestimatorunbiased}
provides the theoretical basis for producing unbiased estimates
of the concentration of each CIS class in the graph under study.
Proof is found in the appendix.

\subsection{\textbf{Pairwise subgraph random walk (PSRW)}} \label{sec:PSRW}
In this subsection, we use SRW as a building block for our next subgraph statistics method.
Instead of sampling over graph $G^{(k)}$,
we apply SRW to $G^{(k-1)}$, and observe that two adjacent sampled $(k-1)$-node CISes (i.e., an edge in $G^{(k-1)}$) contains exactly $k$ distinct nodes.
Thus, we show how to obtain $k$-node CISes by sampling edges from $G^{(k-1)}$.
Using this property we propose a better sampling method than SRW, which we denote {\em pairwise subgraph random walk} (PSRW).
PSRW samples $k$-node CISes by applying SRW to $G^{(k-1)}$, $2< k\le |V|$.
In what follows we show that PSRW produces more accurate estimates than SRW as observed from
experimental results in Section~\ref{sec:results}.
It remains an open theoretical problem why PSRW significantly  outperforms SRW.
Our conjecture is that it is due to the fact that a RW on $G^{(k-1)}$ converges to its stationary behavior more quickly than {$G^{(k)}$}.
Let $s_j$ ($1\le j\le B$) be the $j$-th $(k-1)$-node CIS sampled by applying a SRW to $G^{(k-1)}$. Consider the edge $(s_j,s_{j+1})$ in $G^{(k-1)}$.
This edge is associated with a $k$-node CIS consisting all nodes
contained in $s_j$ and $s_{j+1}$.
Therefore we obtain $k$-node CISes $s^*_j$ ($1\le j< B$),
where $s^*_j$ is the $k$-node CIS generated by $(s_j,s_{j+1})$.
Fig.~\ref{fig:PSRWexample} shows an example of applying a PSRW to sample 3-node CISes $s_1^*$, $s_2^*$, \ldots from $G$.
We can see that $s_1^*$, $s_2^*$, \ldots are generated based on 2-node CISes $s_1$, $s_2$ sampled by applying a SRW to $G^{(2)}$,
where $G$ and $G^{(2)}$ are shown in Fig.~\ref{fig:exampleCISgraph}.
For any $k$-node CIS $x\in C^{(k)}$, let $I^{(k-1)}(x)$ denote the number of $(k-1)$-node
CISes contained by $x$.
For example, 3-node CIS $s_1^*$ in Fig.~\ref{fig:exampleCISgraph} contains two 2-node CISes:
1) the CIS consisting of nodes $A$ and $B$,
and 2) the CIS consisting of nodes $A$ and $E$.
Thus, $I^{(2)}(s_1^*)=2$.
Similarly we have $I^{(2)}(s_3^*)=3$.
It is easy to show that a $k$-node CIS $x$ associates with $\frac{\left(I^{(k-1)}(x)\right)\left(I^{(k-1)}(x)-1\right)}{2}$ edges in graph $G^{(k-1)}$
and $x$ is sampled by PSRW if and only if at least one of its associated edges in $G^{(k-1)}$ is sampled.
For example, as shown in Fig.~\ref{fig:exampleS2S3} where $k=3$,
$s_3^*$ is sampled when PSRW samples at least one of its associated edges in $G^{(2)}$, i.e., the $\frac{\left(I^{(2)}(s_3^*)\right)\left(I^{(2)}(s_3^*)-1\right)}{2}=3$ red edges in Fig.~\ref{fig:exampleS2S3}.
From Theorem~\ref{theorem:srwstationary}, we know that SRW samples each edge
in $G^{(k-1)}$ with equal probability at steady state,
therefore $k$-node CIS $x$ is sampled with the following probability
\begin{equation*}
\pi_E^{(k)}(x)=\frac{I^{(k-1)}(x)\left(I^{(k-1)}(x)-1\right)}{\sum_{y\in C^{(k)}} I^{(k-1)}(y)\left(I^{(k-1)}(y)-1\right)}.
%\label{eq:vsspiedge}
\end{equation*}
Thus, using the Horvitz-Thompson estimator, we estimate the concentration of subgraph class $C_i^{(k)}$ as follows,
\begin{equation}
\tilde\omega_i^{(k)}\!\!=\!\!
\frac{1}{H} \!\! \sum_{j=1}^{B-1} \!\!
\frac{\mathbf{1}(C(s^*_j)= C_i^{(k)})}{I^{(k-1)}(s^*_j)\left(I^{(k-1)}(s^*_j)-1\right)},
   \quad 1\le i\le T_{k},
\label{eq:estimatorofomegaedge}
\end{equation}
%where $H=\sum_{j=1}^{B-1} \frac{1}{I^{(k-1)}(s^*_j)\left(I^{(k-1)}(s^*_j)-1\right)}$.
where $H=\sum_{j=1}^{B-1} \left[I^{(k-1)}(s^*_j)\left(I^{(k-1)}(s^*_j)-1\right)\right]^{-1}$.

\begin{figure}[htb]
\begin{center}
\includegraphics[width=0.85\textwidth]{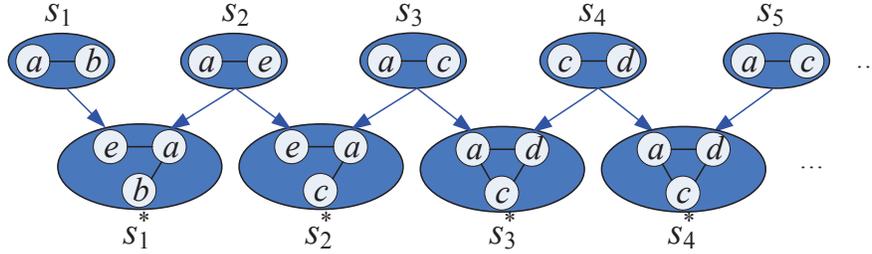}
\caption{An example of applying a PSRW to sample 3-node CISes from $G$.
$s_1$, $s_2$, \ldots, are 2-node CISes sampled by applying a SRW to $G^{(2)}$.
$s_1^*$, $s_2^*$, \ldots, are 3-node CISes generated based on $s_1$, $s_2$, \ldots.
$G$ and $G^{(2)}$ are graphs shown in Fig.~\ref{fig:exampleCISgraph}.
}\label{fig:PSRWexample}
\end{center}
\end{figure}

\begin{figure}[htb]
\begin{center}
\includegraphics[width=0.7\textwidth]{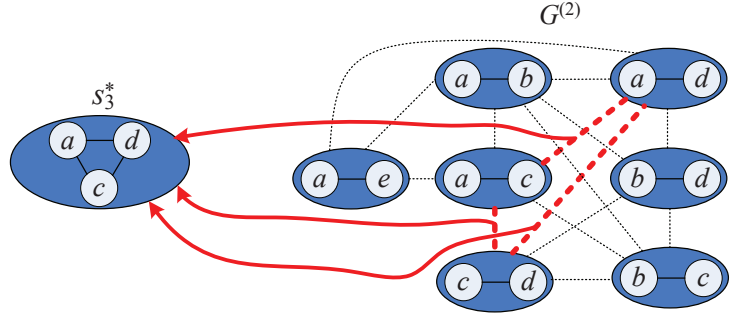}
\caption{3-node CIS $s_3^*$ can be generated by each of the three red edges in $G^{(2)}$.
}\label{fig:exampleS2S3}
\end{center}
\end{figure}

\begin{theorem}\label{theorem:edgeestimatorunbiased}
If $G^{(k)}$ ($2\!\le\! k\!\le\! |V|$) is non-bipartite and connected,
then $\tilde\omega_i^{(k)}$ ($1\!\le\! i\!\le\! T_{k}$)
in Eq.~(\ref{eq:estimatorofomegaedge})
is an asymptotically unbiased estimator of $\omega_i^{(k)}$.
\hfill $\square$
\end{theorem}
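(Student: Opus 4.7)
The plan is to reduce Theorem~\ref{theorem:edgeestimatorunbiased} to Theorem~\ref{theorem:srwstationary} by a careful counting of how each $k$-node CIS is ``hit'' by edges of $G^{(k-1)}$, and then invoke the ergodic theorem to turn time averages of the SRW on $G^{(k-1)}$ into a ratio estimator.

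First, I would set up the correspondence between edges of $G^{(k-1)}$ and elements of $C^{(k)}$. Any edge $(s_j,s_{j+1})\in R^{(k-1)}$ joins two $(k-1)$-node CISes that share exactly $k-2$ nodes, so the union of their node sets has exactly $k$ vertices, and the induced subgraph on these vertices is a well-defined $k$-node CIS $s_j^*$. Conversely, fix $x\in C^{(k)}$ and ask which edges of $G^{(k-1)}$ produce $x$: such an edge must consist of two $(k-1)$-node CISes both contained in $x$. Since $x$ has only $k$ nodes, any two distinct $(k-1)$-node CISes contained in $x$ differ in exactly one vertex, so they are automatically adjacent in $G^{(k-1)}$. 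Thus $x$ is produced by exactly $\binom{I^{(k-1)}(x)}{2}=\tfrac{1}{2}I^{(k-1)}(x)(I^{(k-1)}(x)-1)$ edges.

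Second, I would invoke Theorem~\ref{theorem:srwstationary} applied to $G^{(k-1)}$ (which is non-bipartite and connected under the hypothesis, by Theorems~\ref{theorem:connected} and~\ref{theorem:bipartitedegree}): at steady state, every edge of $G^{(k-1)}$ is visited with equal probability. Combining this with the counting above, the probability that a step of the SRW produces the $k$-node CIS $x$ is
\begin{equation*}
\pi_E^{(k)}(x)=\frac{I^{(k-1)}(x)(I^{(k-1)}(x)-1)}{\sum_{y\in C^{(k)}} I^{(k-1)}(y)(I^{(k-1)}(y)-1)}.
\end{equation*}
Because $\{s_j\}$ is an ergodic Markov chain on $C^{(k-1)}$, the pair chain $\{(s_j,s_{j+1})\}$ is also ergodic (its invariant law is the edge-stationary distribution), and so is $\{s_j^*\}$ as a deterministic function of that pair chain. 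By the ergodic theorem, for any bounded $f:C^{(k)}\to\mathbb{R}$,
\begin{equation*}
\frac{1}{B-1}\sum_{j=1}^{B-1} f(s_j^*) \xrightarrow{B\to\infty} \sum_{x\in C^{(k)}} \pi_E^{(k)}(x)\, f(x).
\end{equation*}

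Third, I would apply this to the numerator of $\tilde\omega_i^{(k)}$ with $f(x)=\mathbf{1}(C(x)=C_i^{(k)})/[I^{(k-1)}(x)(I^{(k-1)}(x)-1)]$ and to $H/(B-1)$ with $f(x)=1/[I^{(k-1)}(x)(I^{(k-1)}(x)-1)]$. With $Z:=\sum_{y\in C^{(k)}}I^{(k-1)}(y)(I^{(k-1)}(y)-1)$, the weights $\pi_E^{(k)}(x)$ cancel the denominator $I^{(k-1)}(x)(I^{(k-1)}(x)-1)$ of $f$, giving limits $|C_i^{(k)}|/Z$ and $|C^{(k)}|/Z$ respectively. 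Taking the ratio, $Z$ cancels and we obtain $\tilde\omega_i^{(k)}\to |C_i^{(k)}|/|C^{(k)}|=\omega_i^{(k)}$, which is the desired conclusion (applied coordinate-wise to show asymptotic unbiasedness of the ratio estimator).

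The only genuine subtlety is that $\{s_j^*\}$ is not itself a Markov chain, so one cannot quote Theorem~\ref{theorem:srwstationary} directly on the $k$-node process; I would address this by working with the lifted chain $\{(s_j,s_{j+1})\}$ on $R^{(k-1)}$, whose stationary distribution is uniform on edges (the standard consequence of Theorem~\ref{theorem:srwstationary}), and then pushing the ergodic theorem forward through the deterministic map $(s_j,s_{j+1})\mapsto s_j^*$. The ratio form of the estimator is then a routine application of the continuous mapping theorem, and the assumption that $G^{(k-1)}$ is non-bipartite and connected (which the theorem hypothesis on $G^{(k)}$ together with the graph-theoretic lemmas provides, once one checks that the hypothesis covers the $(k-1)$-case as well) ensures convergence of the Markov chain to its stationary distribution.
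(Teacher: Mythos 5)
Your proposal is correct and follows essentially the same route as the paper's proof: the counting that each $k$-node CIS $x$ corresponds to $\tfrac{1}{2}I^{(k-1)}(x)\left(I^{(k-1)}(x)-1\right)$ edges of $G^{(k-1)}$, uniform edge sampling at stationarity, almost-sure convergence of the time averages of numerator and normalizer, and cancellation in the ratio. The only cosmetic difference is that you re-derive the edge-average limit via the lifted pair chain $(s_j,s_{j+1})$, whereas the paper simply quotes Lemma~\ref{lemma:edgeestimator}; your remark that the hypothesis must really supply non-bipartiteness and connectedness of $G^{(k-1)}$ (via Theorems~\ref{theorem:connected} and~\ref{theorem:bipartitedegree}) is a fair observation that the paper leaves implicit.
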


\noindent
{\bf Remark:}
We find that PSRW cannot be easily further extended, that is, we cannot sample $k$-node CISes based on applying a SRW to graph $G^{(k')}$, where $k'<k-1$.
This is because it is difficult to analyze and remove sampling errors when
we generate a $k$-node CIS using $k-k'+1$ CISes consequently sampled by a SRW over $G^{(k')}$.
Note that $k-k'+1$ consequently sampled $k'$-node CISes might contain less than $k$ different nodes.

\subsection{Mix Subgraph Sampling (MSS)}
The previous two subsections focus on subgraph classes of a specific size $k$.
Motivated by~\cite{Bhuiyan2012}, we study the problem of estimating the concentrations of subgraph classes of sizes $k-1$, $k$, and $k+1$ simultaneously.
Clearly we can naively solve this problem by applying three independent PRSWs to calculate the concentrations of subgraph classes of sizes $k-1$, $k$, and $k+1$.
However, this naive approach is inefficient.
Next, we propose a more efficient sampling method, mix subgraph sampling (MSS), which requires fewer queries to achieve the same estimation accuracy.
MSS samples $k$-node CISes by applying a SRW to $G^{(k)}$.

Let $s_j$, $j>0$, be the $k$-node CIS sampled by the SRW at step $j$.
Using the CISes visited by a SRW after $B > 1$ steps, MSS estimates the concentrations of $k$-node subgraph classes using Eq.~(\ref{eq:estimatorofomegaVSS}).
Similar to PSRW, MSS uses $s_j$ ($1\le j\le B$) to generate (k+1)-node CISes, and then estimates the concentrations of $(k+1)$-node subgraph classes using Eq.~(\ref{eq:estimatorofomegaedge}).
Last, we show how MSS estimates the concentrations of $(k-1)$-node subgraph classes based on $s_j$, $1\le j\le B$.
Let $C^{(k-1)}(s)$ denote the set of $(k-1)$-node CISes contained in a CIS $s$.
For example, if $s$ is the 4-node CIS consisting of nodes $a$, $b$, $c$, and $e$ shown in Fig.~\ref{fig:exampleCISgraph},
then $S^{(3)}(s)$ consists of three 3-node CISes:
1) the CIS consisting of nodes $a$, $b$, and $c$;
2) the CIS consisting of nodes $a$, $b$, and $e$;
3) the CIS consisting of nodes $b$, $c$, and $e$.
Define $O^{(k)}(s')$ as the set of $k$-node CISes that contain a CIS $s'$.
For example, if $s'$ is the 3-node CIS consisting of nodes $a$, $c$, and $e$ shown in Fig.~\ref{fig:exampleCISgraph},
then $O^{(4)}(s')$ consists of two 4-node CISes:
1) the CIS consisting of nodes $a$, $b$, $c$, and $e$;
2) the CIS consisting of nodes $a$, $c$, $d$, and $e$.
Finally MSS estimates the concentration of subgraph class $C_i^{(k-1)}$ as follows,
\begin{equation}
\breve{\omega}_i^{(k-1)}=\frac{1}{Q} \sum_{j=1}^B
\frac{1}{d^{(k)}(s_j)} \sum_{s'\in C^{(k-1)}(s_j)} \frac{\mathbf{1}(C^{(k-1)}(s')= C_i^{(k-1)})}{|O^{(k)}(s')|},
   \quad 1\le i\le T_{k-1},
\label{eq:estimatorofomegareduce}
\end{equation}
where $Q=\sum_{j=1}^B
\frac{1}{d^{(k)}(s_j)} \sum_{s'\in C^{(k-1)}(s_j)} \frac{1}{|O^{(k)}(s')|}$.
\begin{theorem}\label{theorem:reduceestimatorunbiased}
If $G^{(k)}$ ($3\!\le\! k\!\le\! |V|$) is non-bipartite and connected,
then $\breve\omega_i^{(k-1)}$ ($1\!\le\! i\!\le\! T_{k-1}$)
in Eq.~(\ref{eq:estimatorofomegareduce})
is an asymptotically unbiased estimator of $\omega_i^{(k-1)}$.
\hfill $\square$
\end{theorem}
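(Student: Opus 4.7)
The plan is to mimic the proof of Theorem~\ref{theorem:nodeestimatorunbiased}: invoke the ergodic theorem for the SRW on $G^{(k)}$ to pass from time averages to stationary expectations, cancel the degree weight against $\pi^{(k)}$, and then use a double-counting identity to convert the sum over $k$-node CISes into a sum over $(k-1)$-node CISes.

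First I would observe that under the hypotheses, Theorem~\ref{theorem:srwstationary} ensures that the SRW driving $\{s_j\}$ is an irreducible aperiodic Markov chain on $C^{(k)}$ with stationary distribution $\pi^{(k)}(s) = d^{(k)}(s)/\sum_{t \in C^{(k)}} d^{(k)}(t)$. Hence for any bounded $f : C^{(k)} \to \mathbb{R}$, the ergodic theorem gives $\tfrac{1}{B}\sum_{j=1}^{B} f(s_j) \to \sum_{s} \pi^{(k)}(s) f(s)$ almost surely. Writing $\breve{\omega}_i^{(k-1)} = N_B/D_B$ with
\begin{equation*}
N_B = \frac{1}{B}\sum_{j=1}^{B} \frac{1}{d^{(k)}(s_j)} \!\!\!\sum_{s' \in C^{(k-1)}(s_j)} \!\!\! \frac{\mathbf{1}(C(s') = C_i^{(k-1)})}{|O^{(k)}(s')|}, \qquad D_B = \frac{1}{B}\sum_{j=1}^{B} \frac{1}{d^{(k)}(s_j)} \!\!\!\sum_{s' \in C^{(k-1)}(s_j)} \!\!\!\frac{1}{|O^{(k)}(s')|},
\end{equation*}
the ergodic theorem applied to each of the two bounded summands gives almost sure convergence of $N_B$ and $D_B$ separately.

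The central calculation is the cancellation plus a double sum swap. Because $\pi^{(k)}(s) \propto d^{(k)}(s)$, the $1/d^{(k)}(s_j)$ factor cancels exactly, so
\begin{equation*}
\lim_{B \to \infty} N_B = \frac{1}{\sum_t d^{(k)}(t)} \sum_{s \in C^{(k)}} \sum_{s' \in C^{(k-1)}(s)} \frac{\mathbf{1}(C(s') = C_i^{(k-1)})}{|O^{(k)}(s')|}.
\end{equation*}
Now I swap the order of summation using the equivalence $s' \in C^{(k-1)}(s) \iff s \in O^{(k)}(s')$, so the inner set $\{s : s' \in C^{(k-1)}(s)\}$ has cardinality exactly $|O^{(k)}(s')|$ and cancels the weight in the denominator. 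What remains is $\sum_{s' \in C^{(k-1)}} \mathbf{1}(C(s') = C_i^{(k-1)}) = |C_i^{(k-1)}|$. Dropping the indicator gives the same calculation for $D_B$ with limit $|C^{(k-1)}|/\sum_t d^{(k)}(t)$. Taking the ratio yields $\breve{\omega}_i^{(k-1)} \to |C_i^{(k-1)}|/|C^{(k-1)}| = \omega_i^{(k-1)}$ almost surely.

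The main obstacle, and the reason the theorem is stated as \emph{asymptotic} unbiasedness, is passing from this a.s.\ limit to a limit of expectations. Since $0 \le \breve{\omega}_i^{(k-1)} \le 1$ pathwise (both $N_B$ and $D_B$ are sums of the same nonnegative terms, restricted to $C(s') = C_i^{(k-1)}$ in the numerator), bounded convergence yields $\mathbb{E}[\breve{\omega}_i^{(k-1)}] \to \omega_i^{(k-1)}$; the only remaining subtlety is to know $D_B > 0$ eventually, which holds because $|O^{(k)}(s')| \ge 1$ for every $(k-1)$-node CIS $s'$ whenever $|V| \ge k$ (any $(k-1)$-node connected subgraph can be extended by a neighbor in the connected graph $G$). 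A parallel remark is needed in the hypothesis set: since MSS relies on the SRW on $G^{(k)}$, the statement tacitly requires $k \le |V|-1$, matching the scope of Theorem~\ref{theorem:srwstationary}.
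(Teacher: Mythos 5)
Your proposal is correct and follows essentially the same route as the paper's own proof: apply the ergodic theorem for the SRW on $G^{(k)}$ to numerator and denominator separately, cancel the $1/d^{(k)}(s_j)$ weight against $\pi^{(k)}$, swap the double sum via $s'\in C^{(k-1)}(s)\iff s\in O^{(k)}(s')$ so that $|O^{(k)}(s')|$ cancels, and take the ratio to get $\omega_i^{(k-1)}$. Your additional remarks (bounded convergence to pass from the a.s.\ limit to expectations, positivity of the denominator, and the implicit restriction $k\le|V|-1$) only tighten steps the paper leaves as ``easily found'' and do not change the argument.
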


\section{\textbf{Data Evaluation}} \label{sec:results}
In this section, we first introduce our experimental datasets and a comparison model, which is used to evaluate the performance of our methods for characterizing CIS classes of a specific size $k$ in comparison with state-of-the-art methods.
Then we present the experimental results of PSRW for $k \in \{3,\ldots,6\}$.
Last, we compare the special case MSS $k=3, 4, 5$  against GUISE~\cite{Bhuiyan2012}.
Our experiments are conducted on a Dell Precision T1650 workstation with an Intel Core i7-3770 CPU 3.40 GHz processor and 8 GB DRAM memory.
\subsection{\textbf{Datasets}}
Our experiments are performed on a variety of publicly available datasets taken from the Stanford Network Analysis Platform (SNAP)\footnote{www.snap.stanford.edu}.
We start by evaluating the performance of our methods in characterizing $3$-node CISes over four million-node graphs: Flickr, Pokec, LiveJournal, and YouTube, contrasting our results with ground truth computed through brute force.
Since it is computationally intensive to calculate the ground-truth of $k$-node CIS classes for $k \! \ge \!4$ in large graphs,
the experiments for $k$-CISes, $k \geq 4$, are performed on three relatively small graphs Epinions, Slashdot, and Gnutella, where computing the ground-truth by brute force is feasible.
We also specifically evaluate the performance of our methods for characterizing signed CIS classes in the Epinions and Slashdot signed graphs.
Flickr, LiverJournal, and YouTube are popular photo, blog, and video sharing websites respectively,
where a user can subscribe to other user updates such as photos, blogs, and videos.
Pokec is the most popular on-line social network in Slovakia, and has been
in existence for more than ten years.
These four networks can be represented by directed graphs,
where nodes representing users and a directed edge from $u$ to $v$
represents that user $u$ subscribes to user $v$ or $u$ tags user $v$ as a friend.
Epinions is a who-trust-whom OSN providing consumer reviews, where a directed edge from $u$ to $v$ represents that user $u$ trusts user $v$.
Slashdot is a technology-related news website for its specific user community, where a directed edge from $u$ to $v$ represents that user $u$ tags user $v$ as
a friend or foe.
Epinions and Slashdot networks can be represented by signed graphs,
where a positive edge from $u$ to $v$ indicates that $u$ trusts $v$ or $u$ tags user $v$ a friend,
and a negative edge from $u$ to $v$ indicates that $u$ distrusts $v$ or $u$ tags user $v$ a foe.
Gnutella is a peer-to-peer file sharing network.
Nodes represent users in the Gnutella network and edges represent connections between the Gnutella users.
In the following experiments, we evaluate our proposed methods on the largest connected component (LCC) of these graphs (summarized in Table~\ref{tab:datasets}).

\begin{table}[htb]
\begin{center}
\tbl{Overview of graph datasets used in our simulations.\label{tab:datasets}}{
\begin{tabular}{|c|ccc|}
\hline
%\multirow{2}{*}{Graph} &\multicolumn{3}{c}{LCC} \\
\multirow{2}{*}{Graph} &  & LCC & \\
\cline{2-4}
&nodes&edges&directed-edges\\
\hline
Flickr~\cite{MisloveIMC2007}&1,624,992&15,476,835&22,477,014\\
Pokec~\cite{Takac2012}&1,632,805&22,301,964&30,622,564\\
LiveJournal~\cite{MisloveIMC2007}&5,284,459&48,688,097&76,901,758\\
YouTube~\cite{MisloveIMC2007}&1,134,890&2,987,624&4,942,035\\
Epinions~\cite{Richardson2003}&119,130&704,267&833,390\\
Slashdot~\cite{LeskovecIM2009}&77,350&416,695&516,575\\
Gnutella~\cite{LeskovecIM2009}&6,299&20,776&20,776\\
\hline
\end{tabular}}
\end{center}
\begin{tabnote}
\Note{Note:}{``directed-edges" refers to the number of directed edges in a
directed graph, ``edges" refers to the number of edges in an
undirected graph, and ``LCC" refers to the largest connected component
of a given graph.}
\end{tabnote}
\end{table}

\subsection{\textbf{Comparison model}}
First we introduce the error metric used to compare the different sampling methods.
Mean square error (MSE) is a common measure to quantify the error of an estimate $\hat{\omega}$ with respect to its true value $\omega>0$.
It is defined as
$\text{MSE}(\hat{\omega})=\text{E}[(\hat{\omega}-\omega)^2]=\text{var}(\hat{\omega})+\left(\text{E}[\hat{\omega}]-\omega\right)^2$.
We can see that $\text{MSE}(\hat{\omega})$ decomposes into a sum of the variance and bias of the estimator $\hat{\omega}$, both quantities are important and need to be as small as possible to achieve good estimation performance.
When $\hat{\omega}$ is an unbiased estimator of $\omega$,
then $\text{MSE}(\hat{\omega})= \text{var}(\hat{\omega})$.
In our experiments, we study the normalized root mean square error (NRMSE) to measure the relative error of the estimator $\hat{\omega}_i$ of the subgraph class concentration $\omega_i$, $i=1,2,\dots$.
$\text{NRMSE}(\hat{\omega}_i)$ is defined as:
\[
\text{NRMSE}(\hat{\omega}_i)=\frac{\sqrt{\text{MSE}(\hat{\omega}_i)}}{\omega_i}, \qquad i=1,2,\dots.
\]
When $\hat{\omega}_i$ is an unbiased estimator of $\omega_i$,
then $\text{NRMSE}(\hat{\omega}_i)$ is equivalent to the normalized standard error of $\hat{\omega}_i$, i.e., $\text{NRMSE}(\hat{\omega}_i)= \sqrt{\text{var}(\hat{\omega}_i)}/\omega_i$.
Note that our metric uses
the relative error. Thus, when $\omega_i$ is small, we consider
values as large as $\text{NRMSE}(\hat{\omega}_i)$ to be acceptable.
In all our experiments, we average the estimates and calculate their
NRMSEs over 1,000 runs.
%Let $B$ denote the number of sampled CISes per run.
%and $B'$ denote the number of distinct nodes sampled in $G_d$ per run.

We compute the NRMSE of our methods for estimating concentrations of CIS classes of specific size $k$,
in comparison with that of two state-of-the-art algorithms FANMOD~\cite{Wernicke2006}
and GUISE in~\cite{Bhuiyan2012} under the constraint that the number of queries cannot exceed $B^*$.
As mentioned earlier, issues arise when comparing our methods of estimating concentrations of CIS classes of a specific size $k$
to that of the method in~\cite{Bhuiyan2012},
as the latter wastes most queries to sample CISes of size not equal to $k$.
To address this problem, we use adapt the MHRW of GUISE~\cite{Bhuiyan2012} to focus on subgraphs of size $k$,
which we name metropolis-Hastings subgraph random walk (MHSRW).
Later we compare MSS $k=3,4,5$ directly to GUISE~\cite{Bhuiyan2012} showing that MSS is significantly more accurate than GUISE.

To sample $k$-node CISes, MHSRW works as follows:
At each step, MHSRW randomly selects a $k$-node CIS $y$ from $X(x)$, the set of neighbors of the current $k$-node CIS $x$ on CIS relationship graph $G^{(k)}$,
and accepts the move with probability $\min\left \{1, \frac{d^{(k)}(y)}{d^{(k)}(x)}\right\}$. Otherwise, it remains at $x$.
MHSRW samples $k$-node CISes uniformly when it reaches the steady state.
Based on CIS samples $s_j$ ($1\le j\le B$), MHSRW estimates the concentration of
subgraph class $C_i^{(k)}$ for graph $G_d$ as follows,
\begin{equation*}
\breve{\omega}_i^{(k)}=\frac{1}{B}\sum_{j=1}^B \mathbf{1}(C(s_j)= C_i^{(k)}), \qquad 1\le i\le T_k.
%\label{eq:estimatorofomegaMH}
\end{equation*}

\subsection{\textbf{Results of estimating 3-node CIS class concentrations}}
In this subsection we show results of estimating 3-node CIS class concentrations for undirected, directed, and signed graphs respectively.

1) \textbf{3-node undirected CIS classes}: Fig.~\ref{fig:undirectedthreenodes} shows the results of estimating
$\omega_2^{(3)}$, the concentration of the 3-node undirected CIS class 2
(or the triangle as shown in Fig.~\ref{fig:34nodeclasses} (a))
for Flickr, Pokec, LiveJournal, and YouTube graphs,
where $B^*$ is the number of queries, i.e., the number of distinct nodes required to query in the original graph $G$.
The true value of $\omega_2^{(3)}$ for
Flickr, Pokec, LiveJournal, and YouTube are 0.0404, 0.0161, 0.0451, and 0.0021 respectively.
The results show that PSRW exhibits the smallest errors, which are almost
an order of magnitude less than errors of MHSRW and FANMOD for
Flickr and Pokec graphs.
SRW is more accurate than MHSRW and FANMOD
but less accurate than PSRW.
Note that PSRW uses only $B^*\!=\!3\times 10^3$ queries and still
exhibits smaller errors than the other methods
that use one order of magnitude more queries $B^*\!>\!3\times 10^4$.
Hence, PSRW reduces more than 10-fold the number of queries
required to achieve the same estimation accuracy.
Meanwhile we observe that an order of magnitude increase in $B^*$ roughly
decreases the error by $1/\sqrt {10}$ for all methods studied.
%The global clustering coefficient $\gamma$ can be efficiently calculated
%from $\omega_2^{(3)}$ using Eq.~(\ref{eq:gcc}).
Fig.~\ref{fig:convergerate} plots the evolution of $\omega_2^{(3)}$
estimates as a function of $B$ (the number of sampling steps) for one run.
We observe that PSRW {\em converges} to the value of
$\omega_2^{(3)}$ when $10^3$, $10^3$, $2\times 10^5$, and $10^4$ CISes are sampled for Flickr, Pokec, LiveJournal, and YouTube respectively, and is much more quickly than the other methods.
%Note that MHSRW and FANMOD do {\em not} converge to the value
%of $\omega_2^{(3)}$ even when $10^6$ CISes are sampled.
To compare the performances of these methods after the random walks have entered the stationary regime,
Fig.~\ref{fig:cmpstationary} shows the results of estimating
$\omega_2^{(3)}$ based on $10^4$ CISes sampled after $10^6$ steps.
We can see that PSRW outperforms the other methods at steady state.

2) \textbf{3-node directed CIS classes}: Fig.~\ref{fig:groundtruth3d} shows the concentrations of 3-node directed
CIS classes for Flickr, Pokec, LiveJournal, and YouTube graphs, and the subgraph classes and their
associated IDs are listed in Fig.~\ref{fig:3nodeclasses}.
The total numbers of 3-node CISes are $1.4\times 10^{10}$,
$2.0\times 10^9$, $6.9\times 10^9$, and $1.5\times 10^9$ for Flickr, Pokec, LiveJournal, and YouTube respectively.
Fig.~\ref{fig:reducethree} compares concentrations estimates of
3-node directed CIS classes for different methods under the same number of queries
$B^*\!=\!10,000$.  The results show that subgraph classes with smaller
concentrations have larger NRMSEs.  PSRW is significantly more
accurate than the other methods for most subgraph classes.
SRW is not shown in the plots but its performance lies again somewhere between MHSRW and FANMOD.

\begin{figure*}[htb]
\center
\subfigure[Flickr]{
\includegraphics[width=0.49\textwidth]{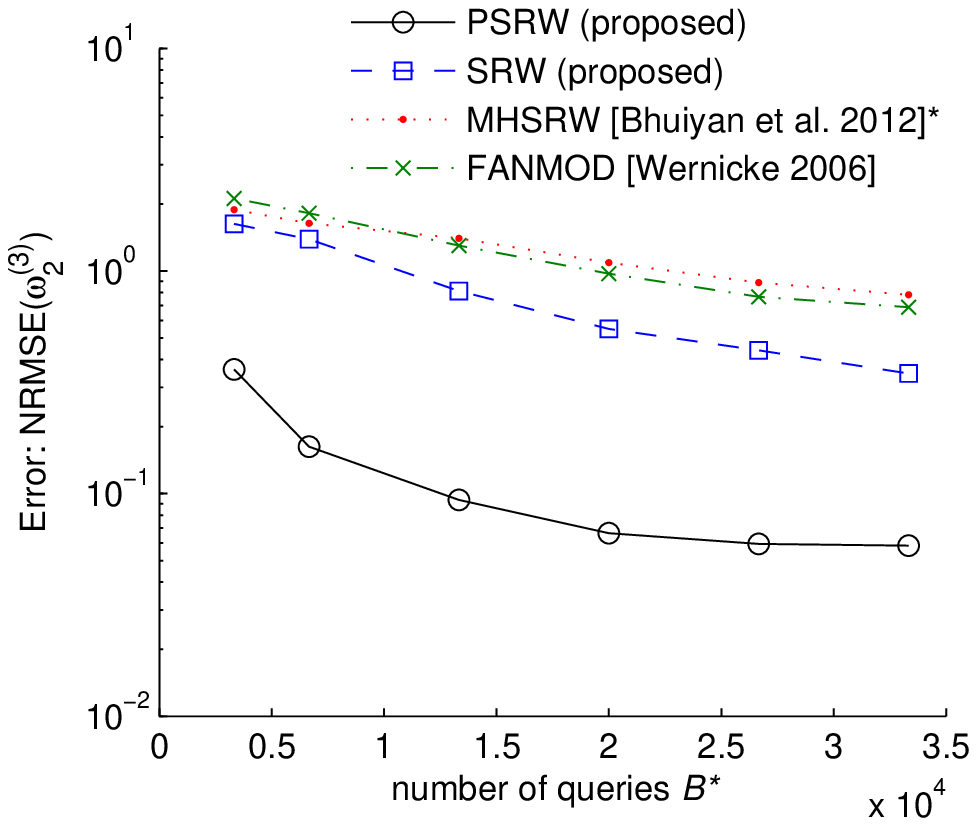}}
\subfigure[Pokec]{
\includegraphics[width=0.49\textwidth]{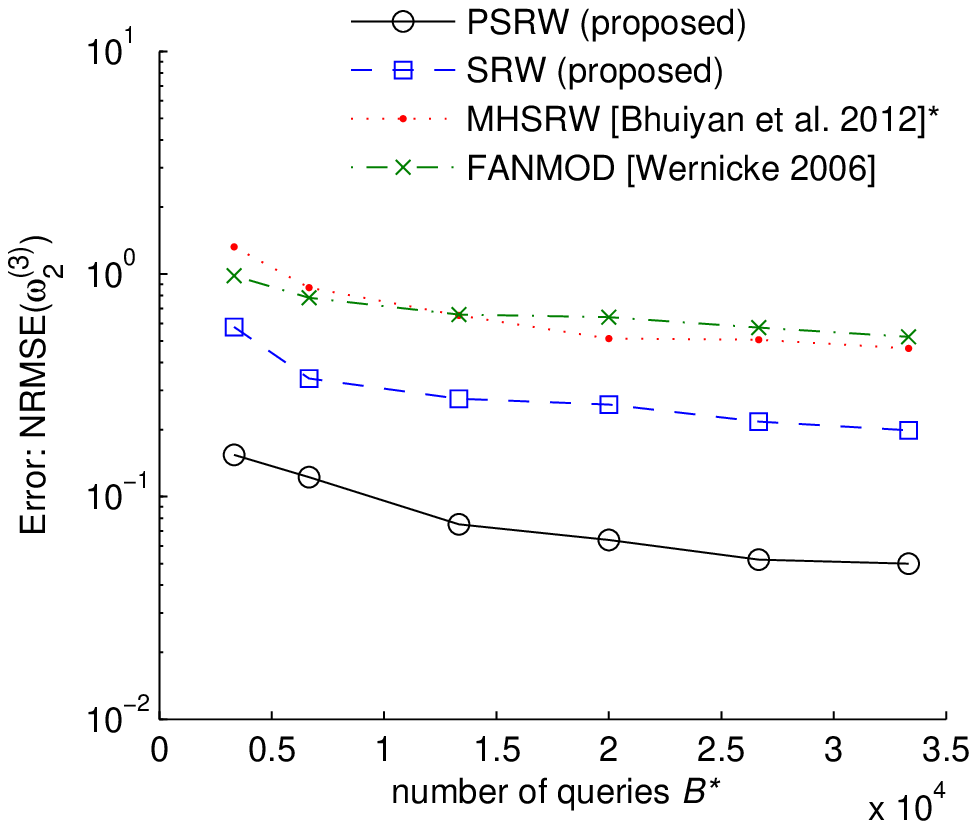}}
\subfigure[LiveJournal]{
\includegraphics[width=0.49\textwidth]{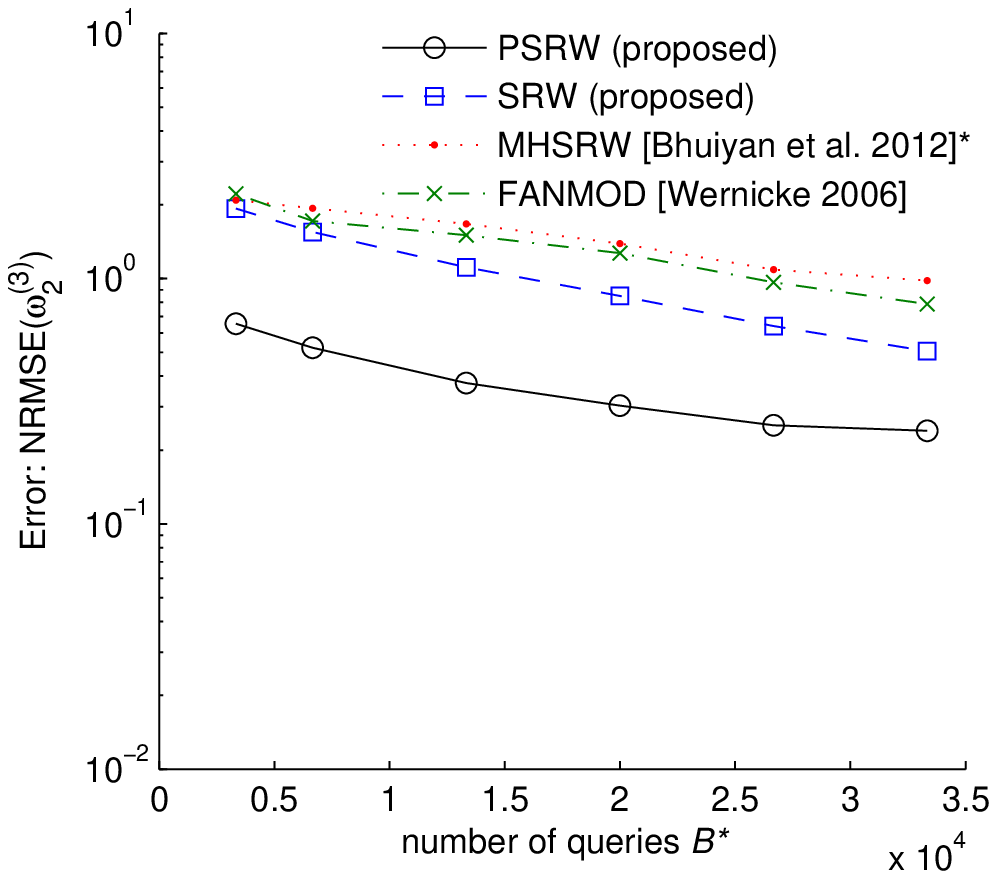}}
\subfigure[YouTube]{
\includegraphics[width=0.49\textwidth]{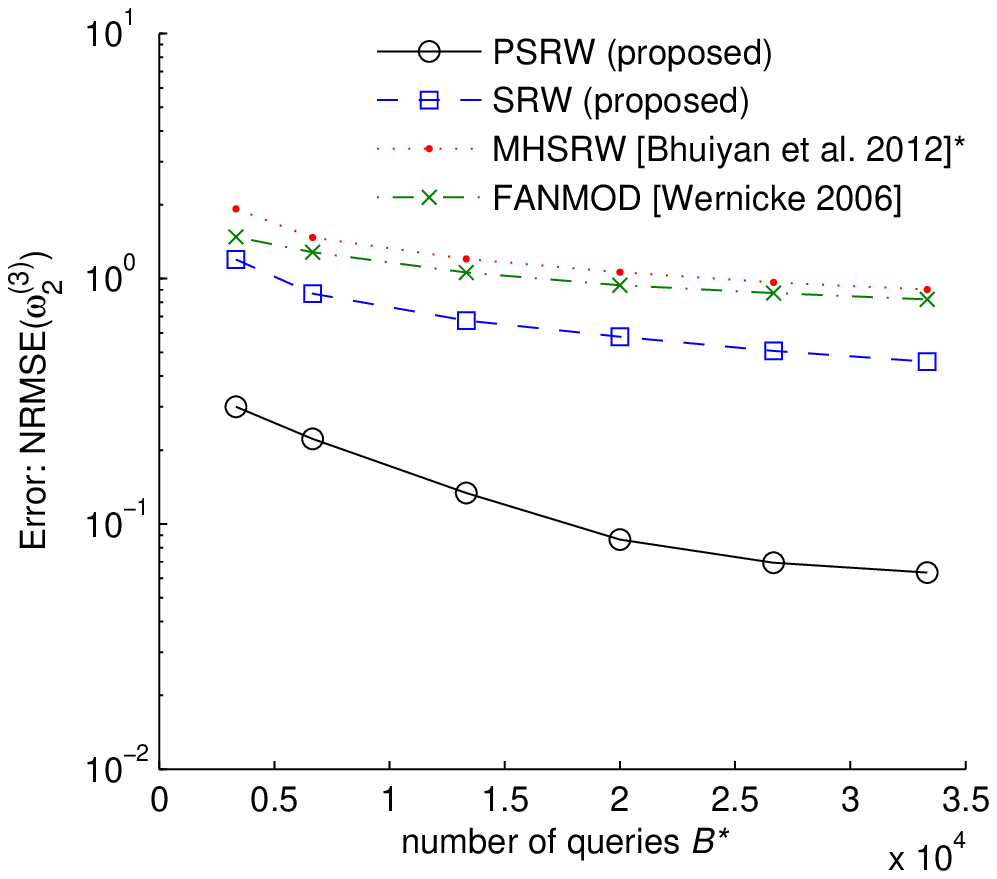}}
\caption{(Flickr, Pokec, LiveJournal, and YouTube) Compared NRMSEs of concentration estimates of 3-node undirected CIS classes for different methods.}\label{fig:undirectedthreenodes}
\end{figure*}

\begin{figure*}[htb]
\center
\subfigure[Flickr, $\omega_2^{(3)}=0.0404$.]{
\includegraphics[width=0.49\textwidth]{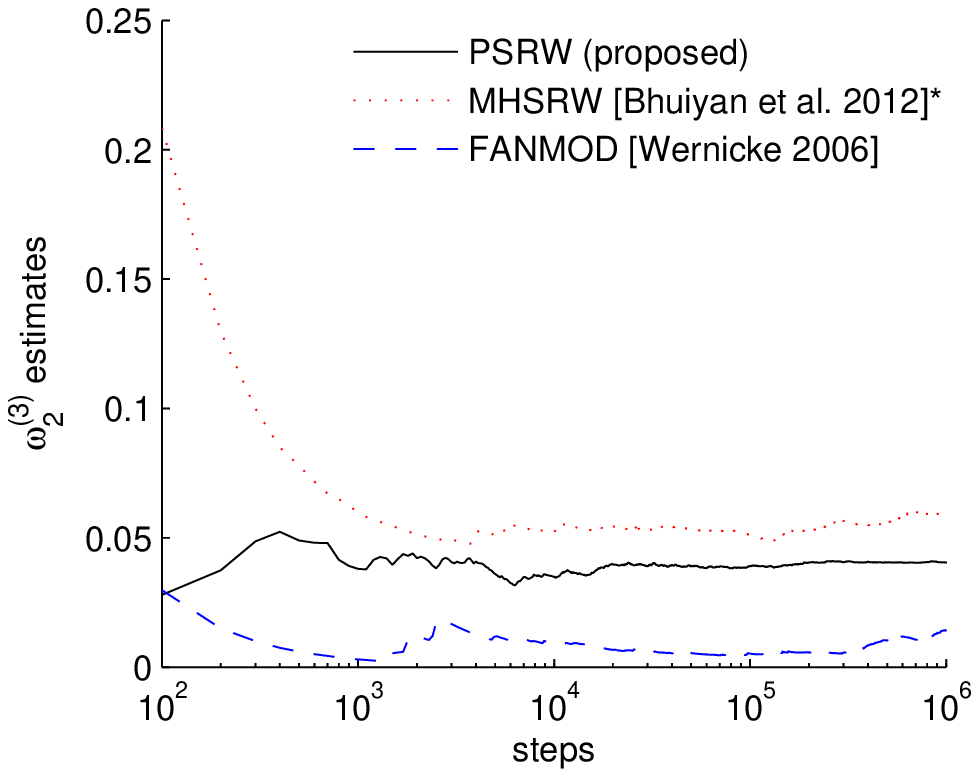}}
\subfigure[Pokec, $\omega_2^{(3)}=0.0161$.]{
\includegraphics[width=0.49\textwidth]{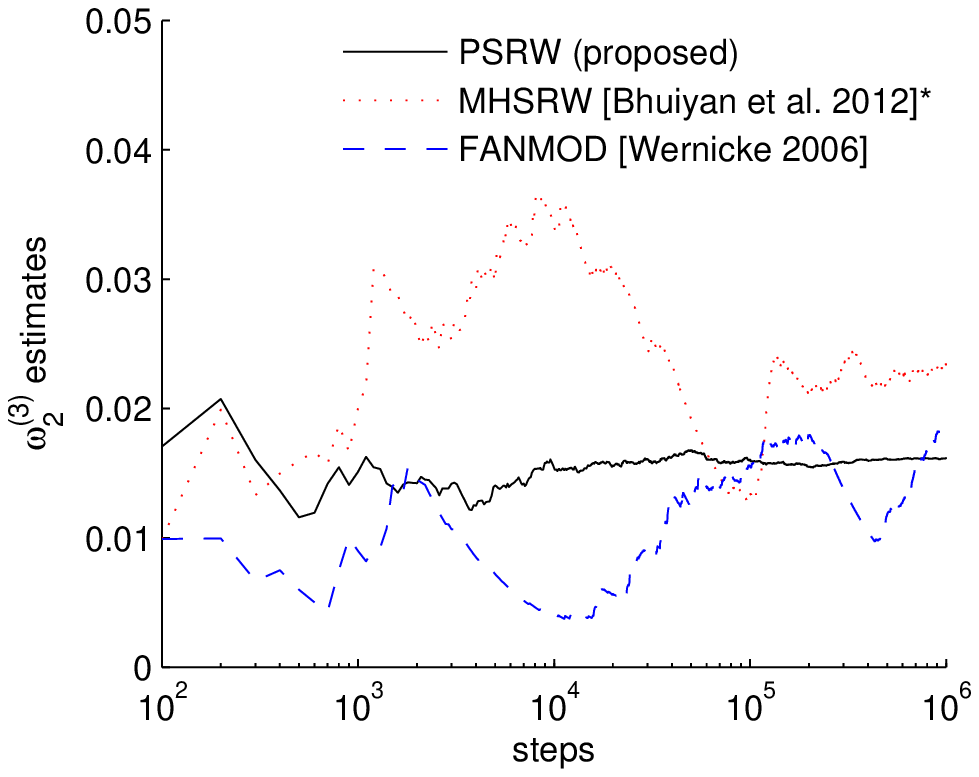}}
\subfigure[LiveJournal, $\omega_2^{(3)}=0.0451$.]{
\includegraphics[width=0.49\textwidth]{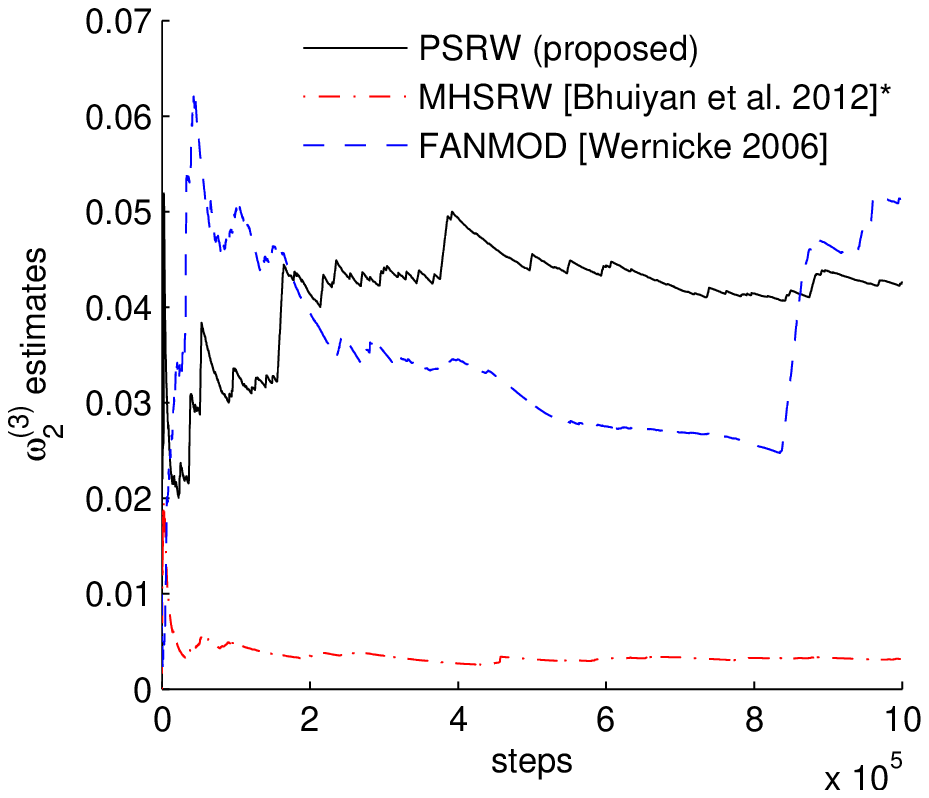}}
\subfigure[YouTube, $\omega_2^{(3)}= 0.0021$.]{
\includegraphics[width=0.49\textwidth]{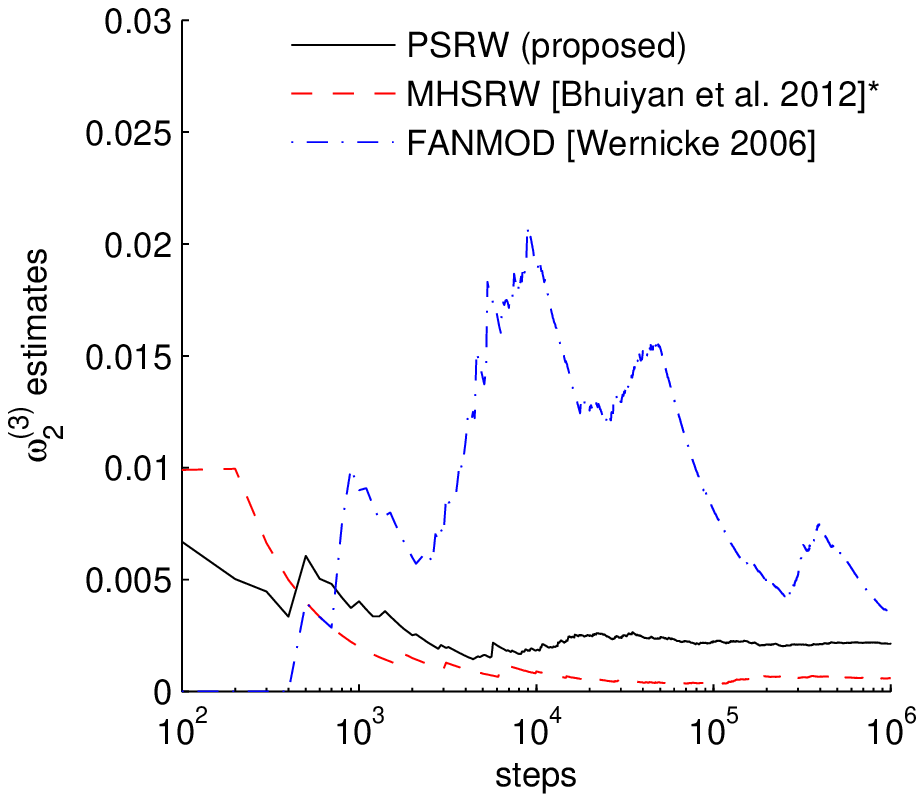}}
\caption{(Flickr, Pokec, LiveJournal, and YouTube) Compared $\omega_2^{(3)}$ estimates of 3-node undirected CIS classes for different methods.}\label{fig:convergerate}
\end{figure*}

\begin{figure}[htb]
\begin{center}
\includegraphics[width=0.7\textwidth]{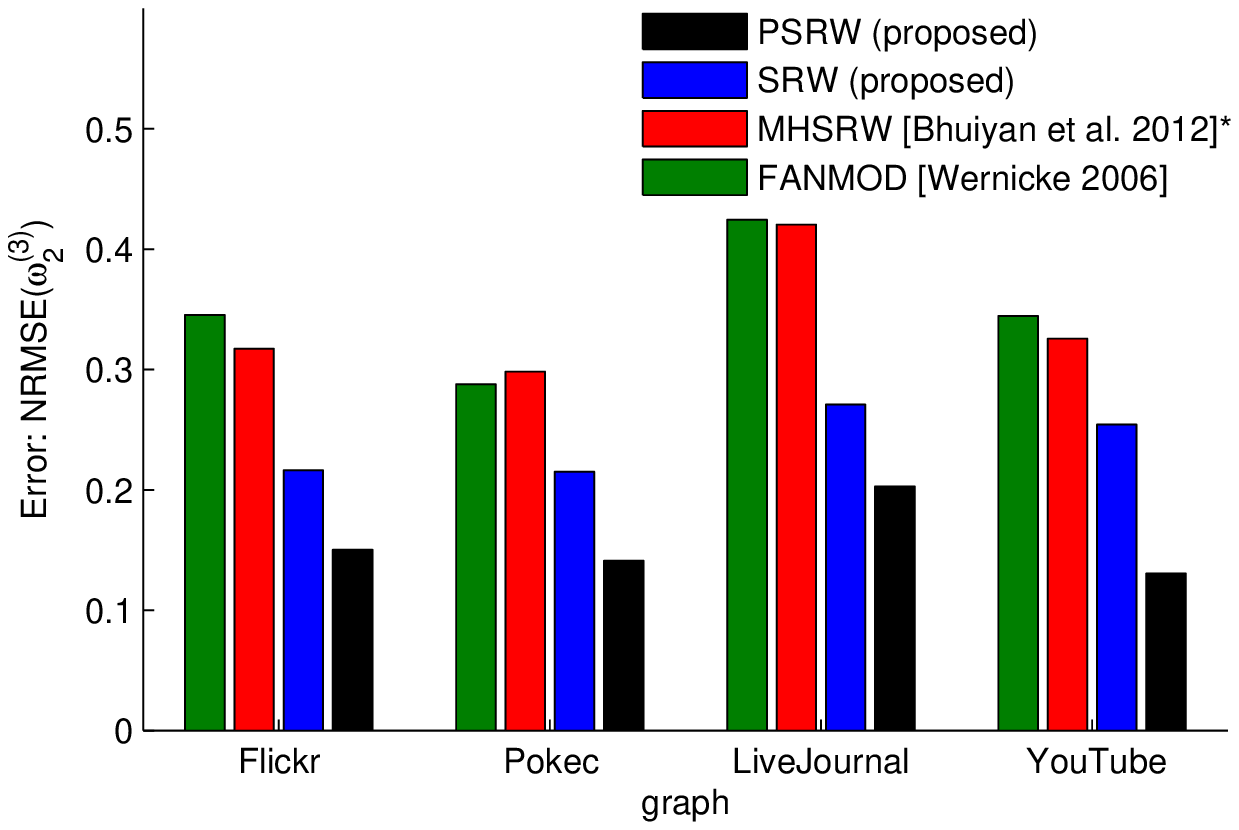}
\caption{(Flickr, Pokec, LiveJournal, and YouTube) Compared NRMSEs of concentration estimates of 3-node undirected CIS classes for different methods at the stationary state.}\label{fig:cmpstationary}
\end{center}
\end{figure}

\begin{figure}[htb]
\begin{center}
\includegraphics[width=0.99\textwidth]{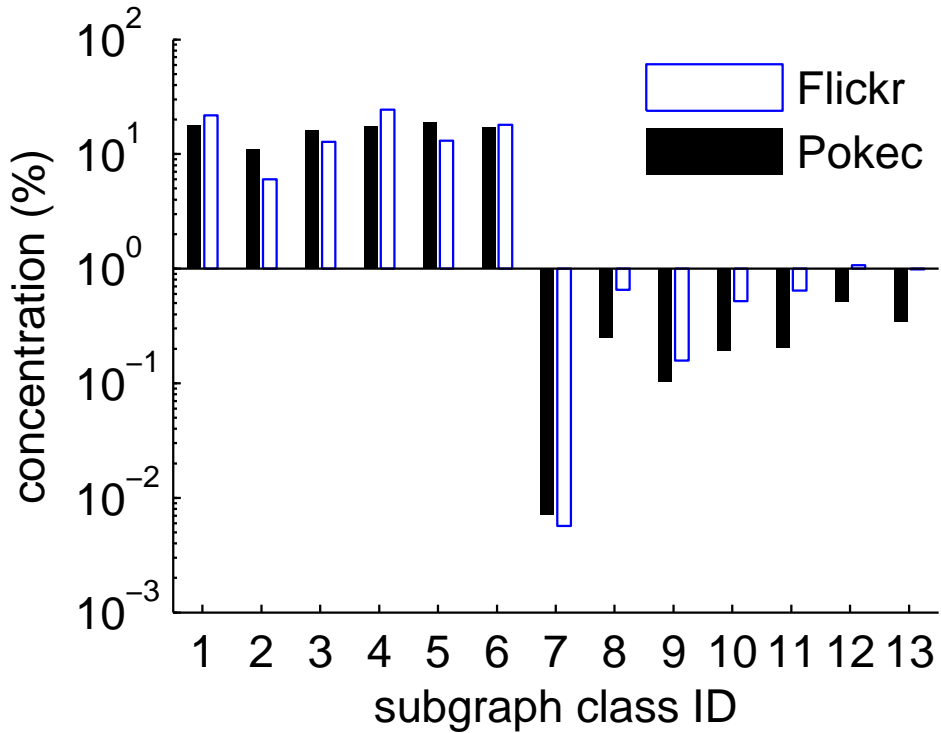}
\caption{(Flickr, Pokec, LiveJournal, and YouTube) Concentrations of 3-node directed CIS classes.}\label{fig:groundtruth3d}
\end{center}
\end{figure}

\begin{figure*}[htb]
\center
\subfigure[Flickr]{
\includegraphics[width=0.49\textwidth]{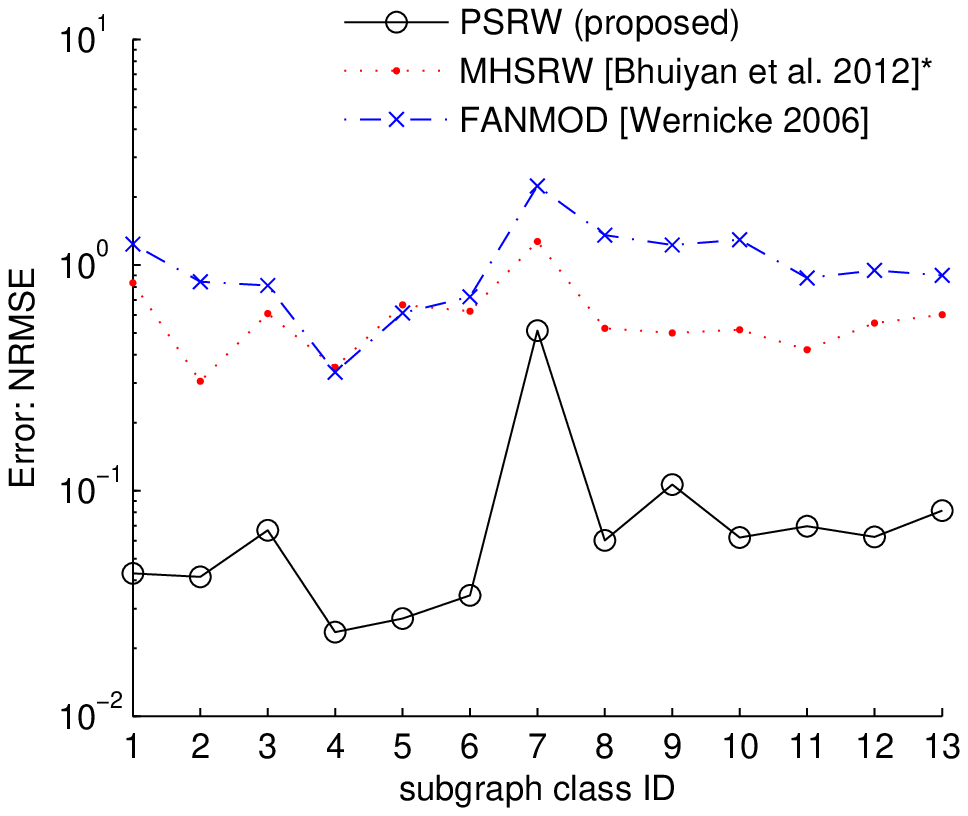}}
\subfigure[Pokec]{
\includegraphics[width=0.49\textwidth]{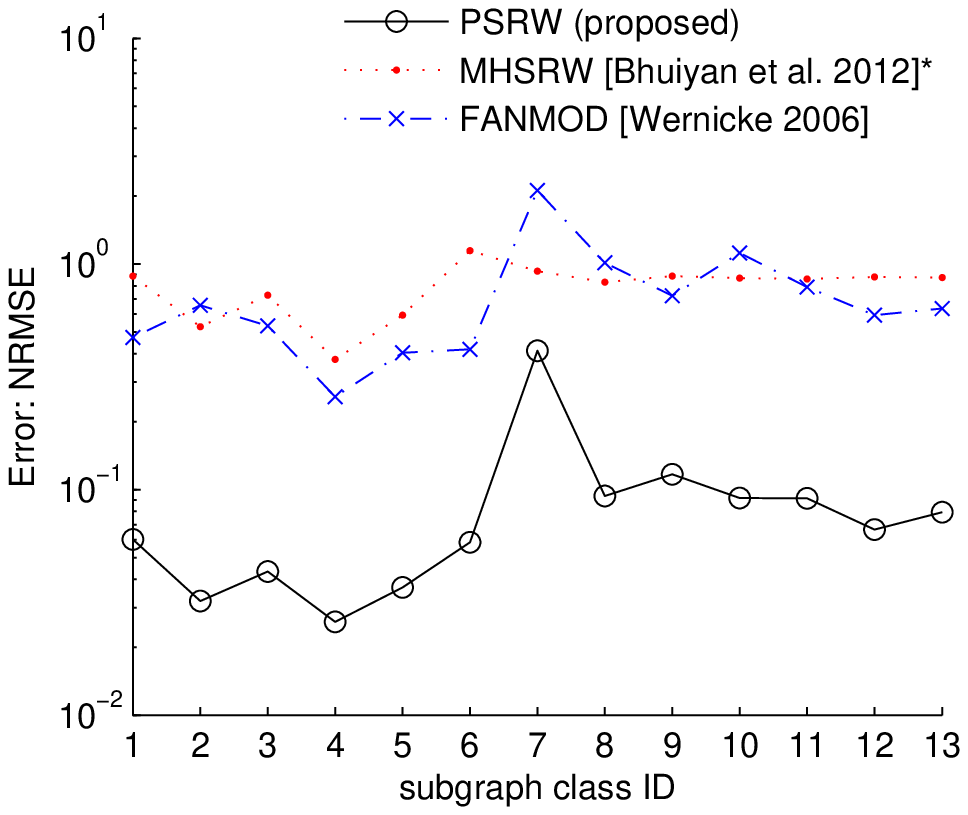}}
\subfigure[LiveJournal]{
\includegraphics[width=0.49\textwidth]{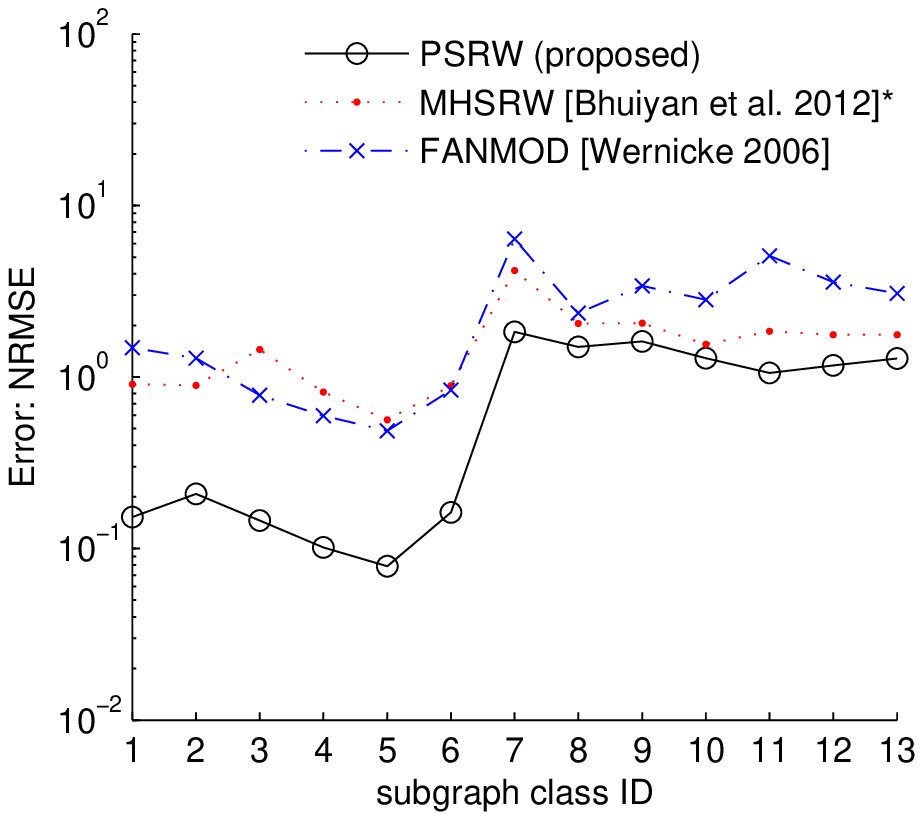}}
\subfigure[YouTube]{
\includegraphics[width=0.49\textwidth]{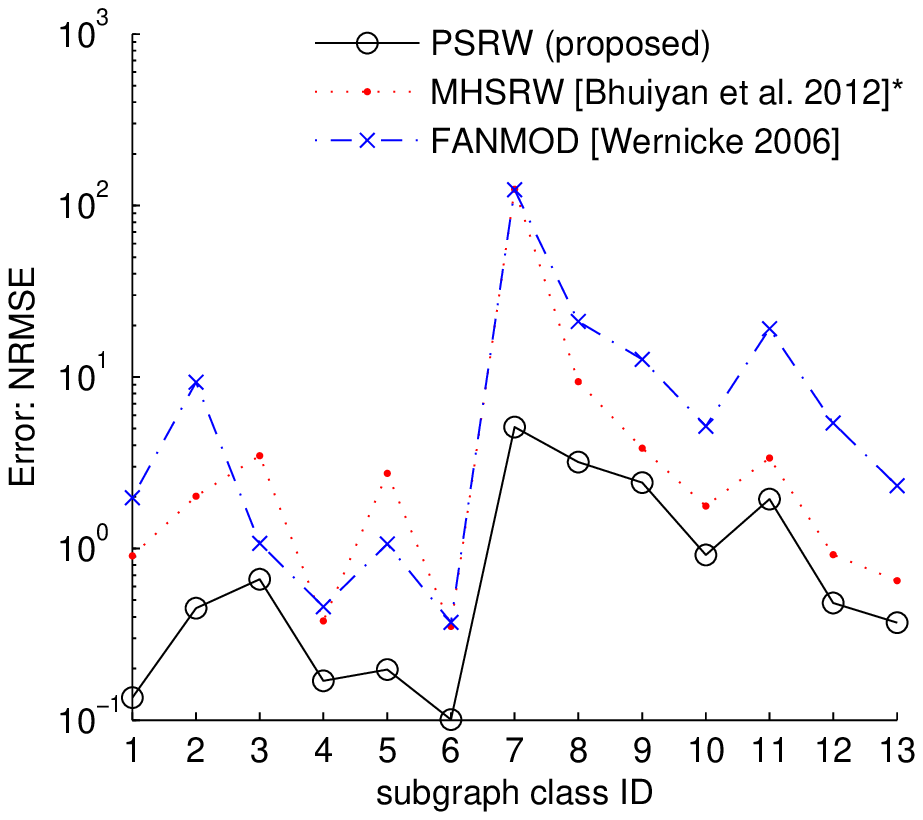}}
\caption{(Flickr, Pokec, LiveJournal, and YouTube) Compared NRMSEs of concentration estimates of 3-node directed CIS classes for different methods under the same number of queries $B^*=10,000$.}\label{fig:reducethree}
\end{figure*}

3) \textbf{3-node signed and undirected CIS classes}: Figure~\ref{fig:groundtruth3sign} shows the concentrations of 3-node signed and
undirected CIS classes (as listed in
Fig.~\ref{fig:signclasses3nodes}) for Epinions and Slashdot graphs.
Epinions and Slashdot graphs have
$1.7\times 10^{8}$ and $6.7 \times 10^{7}$ signed and
undirected 3-node CISes respectively.
Fig.~\ref{fig:signreducethree} shows the estimated concentrations of
signed and undirected 3-node CIS classes for different methods
under $B^*\!=\!2,000$ queries.
The results show that subgraph classes with smaller concentrations have
larger NRMSEs.
All NRMSEs given by PSRW are much smaller than one for all subgraph classes.
PSRW is almost four times more accurate than MHSRW and FANMOD.
MHSRW exhibits slightly smaller errors than FANMOD for most subgraph classes.

\begin{figure}[htb]
\begin{center}
\includegraphics[width=0.45\textwidth]{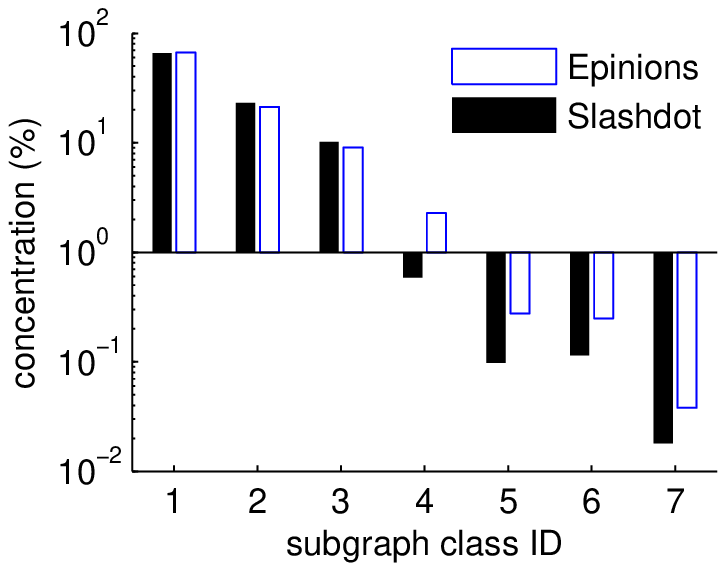}
\caption{(Epinons and Slashdot) Concentrations of 3-node signed and undirected CIS classes.}\label{fig:groundtruth3sign}
\end{center}
\end{figure}

\begin{figure*}[htb]
\center
\subfigure[Epinons]{
\includegraphics[width=0.49\textwidth]{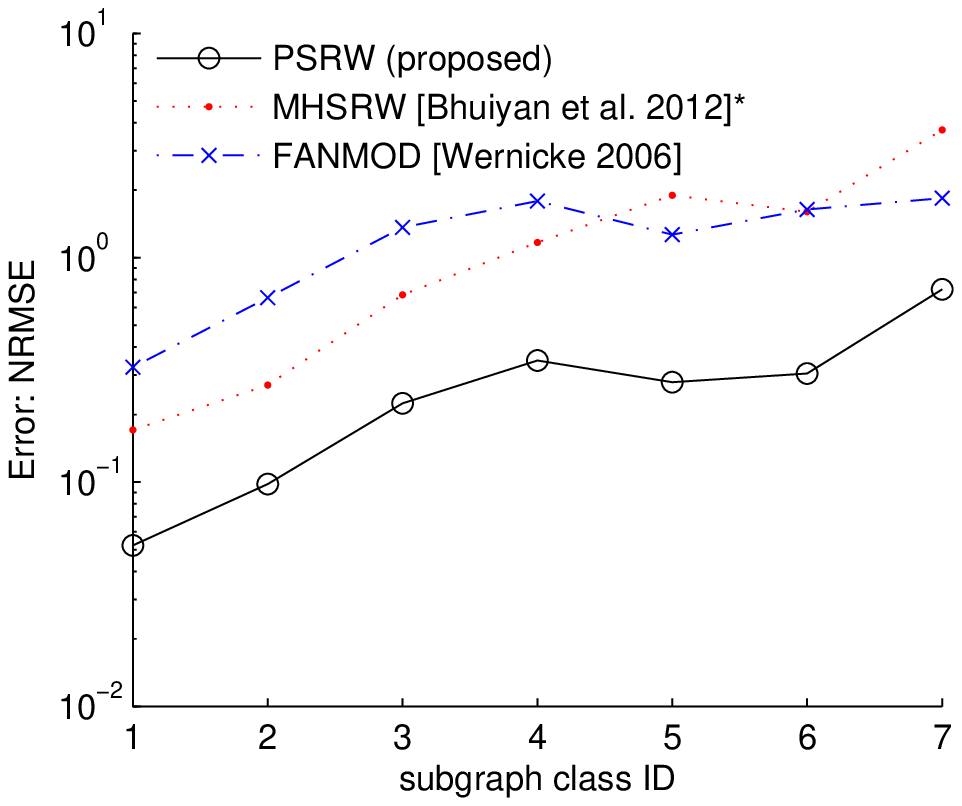}}
\subfigure[Slashdot]{
\includegraphics[width=0.49\textwidth]{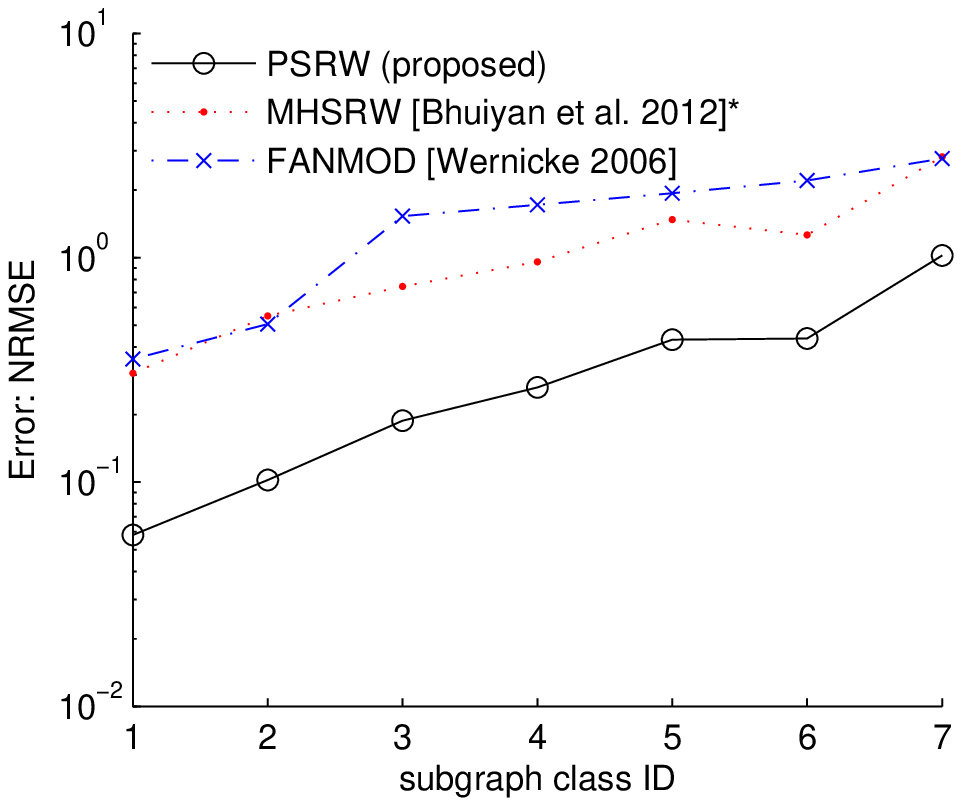}}
\caption{(Epinons and Slashdot) Compared NRMSEs of concentration estimates of 3-node signed and undirected CIS classes for different methods under the same number of queries $B^*=2,000$.}\label{fig:signreducethree}
\end{figure*}

\subsection{\textbf{Results of estimating 4-node CIS class concentrations}}
Figure~\ref{fig:groundtruth4u} shows the concentrations of
4-node undirected CIS classes (as listed in Fig.~\ref{fig:34nodeclasses} (b))
for Epinions, Slashdot, and Gnutella graphs.
Epinions, Slashdot, and Gnutella graphs have $2.5\times 10^{10}$, $2.1\times 10^{10}$, and $1.0\times 10^7$
undirected four-node CISes respectively.
Fig.~\ref{fig:undirectedfourclass} shows the estimated concentrations of
undirected four-node CIS classes for different methods under $B^*\!=\!2,000$ queries.
The results show that all NRMSEs given by PSRW are
smaller than 0.4 for subgraph classes 1 to 5.
PSRW is significantly more accurate than the other methods.
% and PSRW is nearly 1.5 times more accurate than SRW.
\begin{figure}[htb]
\begin{center}
\includegraphics[width=0.7\textwidth]{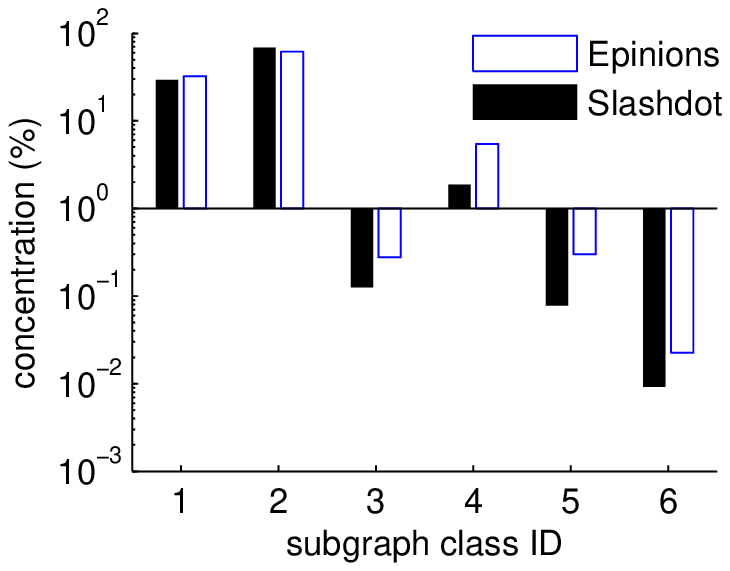}
\caption{(Epinons, Slashdot, and Gnutella) Concentrations of the 4-node undirected CIS classes.}\label{fig:groundtruth4u}
\end{center}
\end{figure}

\subsection{\textbf{Results of estimating 5-node and 6-node CIS class concentrations}}
Because the number of $k$-node CISes exponentially
increases with $k$,
it is computationally intensive to calculate the ground-truth
of $k$-node CIS classes' %concentrations large and media graphs when $k\ge 4$.
when $k \! \ge \!5$.
Nevertheless, we proceed to evaluate our methods based on a relatively
small graph Gnutella which has 6,299 nodes
and 20,776 edges for $k=5$ and $k=6$. Gnutella has $3.9\times 10^8$
five-node CISes and $1.7\times 10^{10}$ six-node CISes.
It takes almost one day to obtain all these subgraphs using the software
provided in Kashtan et al.~\cite{Kashtan2004}.
Fig.~\ref{fig:fiveandsixnode} shows NRMSEs of concentration estimates of one
five-node undirected CIS class and
one five-node undirected CIS class for Gnutella graph.
The five-node undirected CIS class we studied is
topologically equivalent to a five-node tree with depth one.
The true value of its concentration is 0.183 for Gnutella graph.
The results show that PSRW is nearly four times more accurate
than MHSRW and FANMOD.
%while SRW is almost two times more accurate than MHSRW and FANMOD.
The six-node undirected CIS class we studied is topologically
equivalent to a six-node tree with depth one.
The true value of its concentration is 0.0589 for the Gnutella graph.
The results show that PSRW is nearly twice as accurate as
MHSRW and FANMOD.
%and SRW  (not shown) is almost 1.5 times more accurate than MHSRW and FANMOD.

\subsection{\textbf{Time cost of sampling CISes}}
The time cost of sampling a $k$-node CIS consists of two parts:
1) computational time, and
2) the query response time.
We observe that the computation times increase with $k$ for PSRW, MHSRW, and FANMOD,
and the computation times are smaller than 0.1 second for $k\le 5$, which is usually
smaller than the query rate limits for querying a node imposed by OSNs.
Thus, we can easily find that PSRW is computationally more efficient than MHSRW,
since PSRW samples $k$-node CISes from graph $G^{(k-1)}$,
while MHSRW samples $k$-node CISes from graph $G^{(k)}$.
We compare the performances of different methods under the same time budget $T$.
We do simulations to evaluate the performance of different methods for two cases:
1) the graph is stored in a local database with near zero query delay,
and 2) the graph is stored in a remote database with 100 milliseconds query delay.
Fig.~\ref{fig:computationalcost} shows the NRMSEs of estimates of $\omega_2^{(3)}$ under $T$=200, 400, 600, 800, and 1,000 seconds for the Flickr graph.
The results show that PSRW is four and five times as accurate as the other methods for the same time budget $T$ for the local and remote databases respectively. The results for other graphs are similar, so we omit them here.

\begin{figure*}[htb]
\center
\subfigure[Epinons]{
\includegraphics[width=0.49\textwidth]{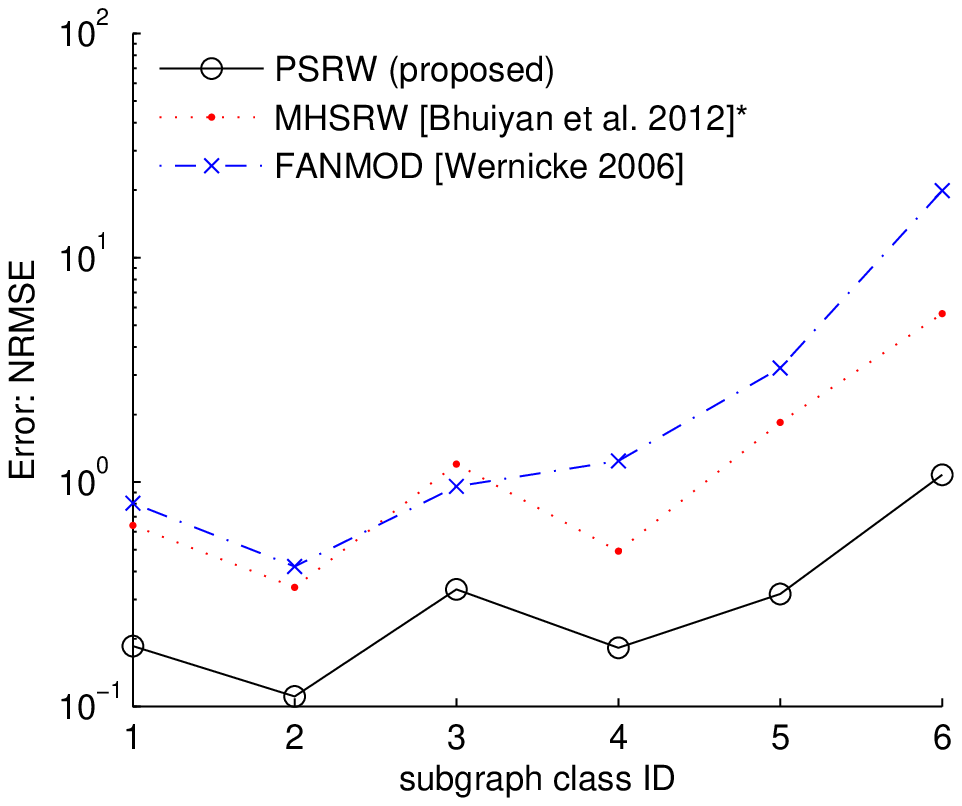}}
\subfigure[Slashdot]{
\includegraphics[width=0.49\textwidth]{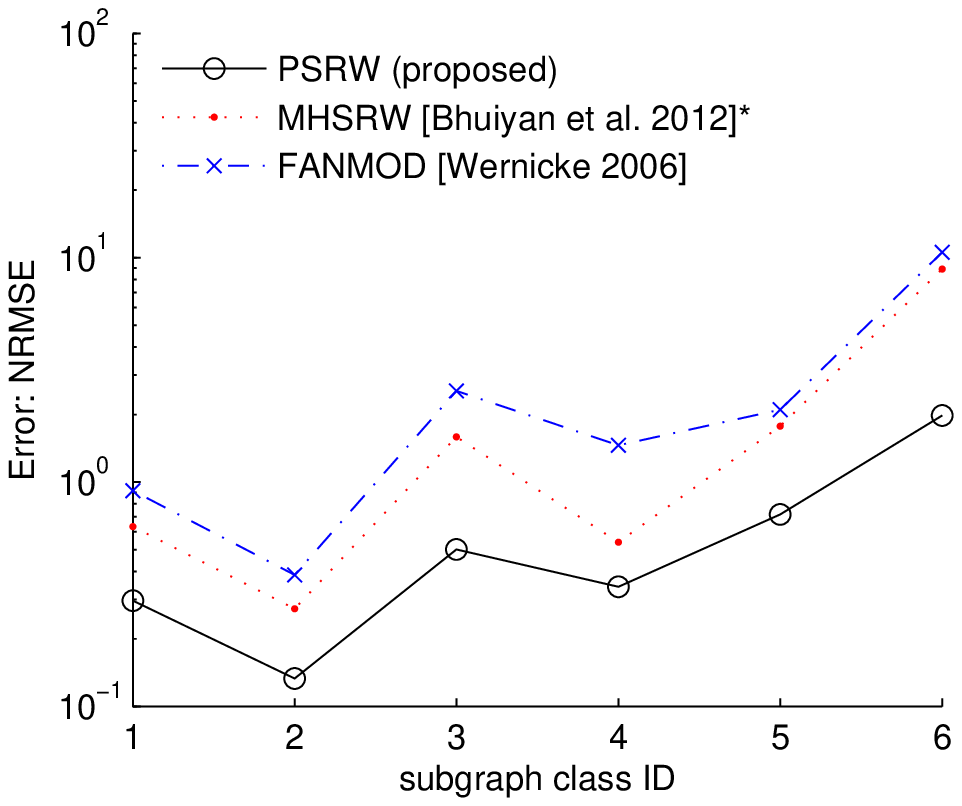}}
\subfigure[Gnutella]{
\includegraphics[width=0.49\textwidth]{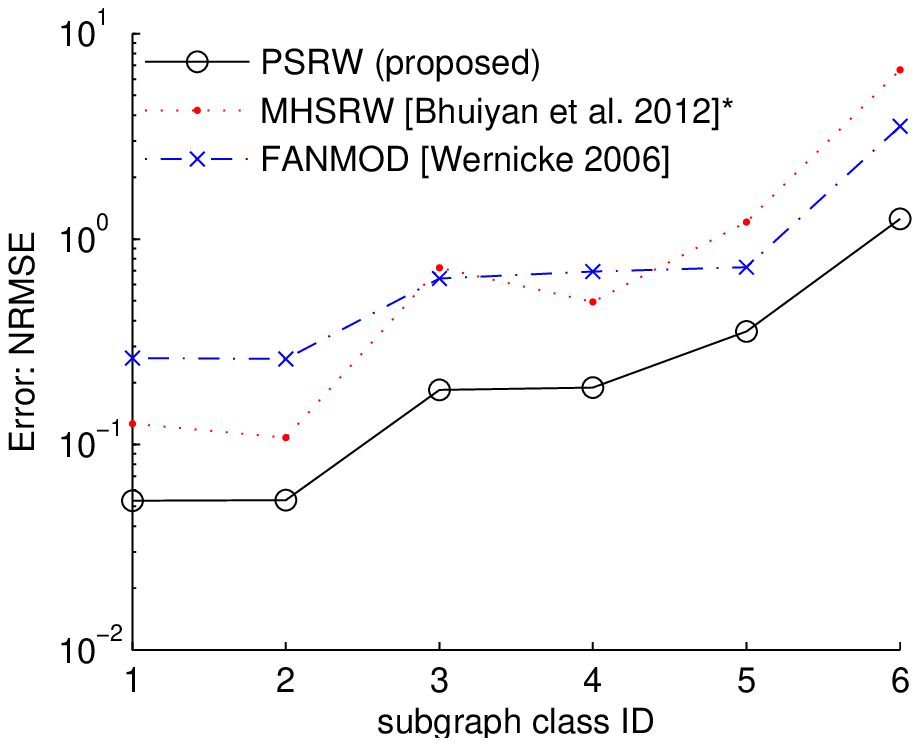}}
\caption{(Epinons and Slashdot) Compared NRMSEs of concentration estimates of 4-node undirected CIS classes for different methods under the same number of queries $B^*=2,000$.}\label{fig:undirectedfourclass}
\end{figure*}

\begin{figure*}[htb]
\center
\subfigure[5-node tree with depth one]{
\includegraphics[width=0.49\textwidth]{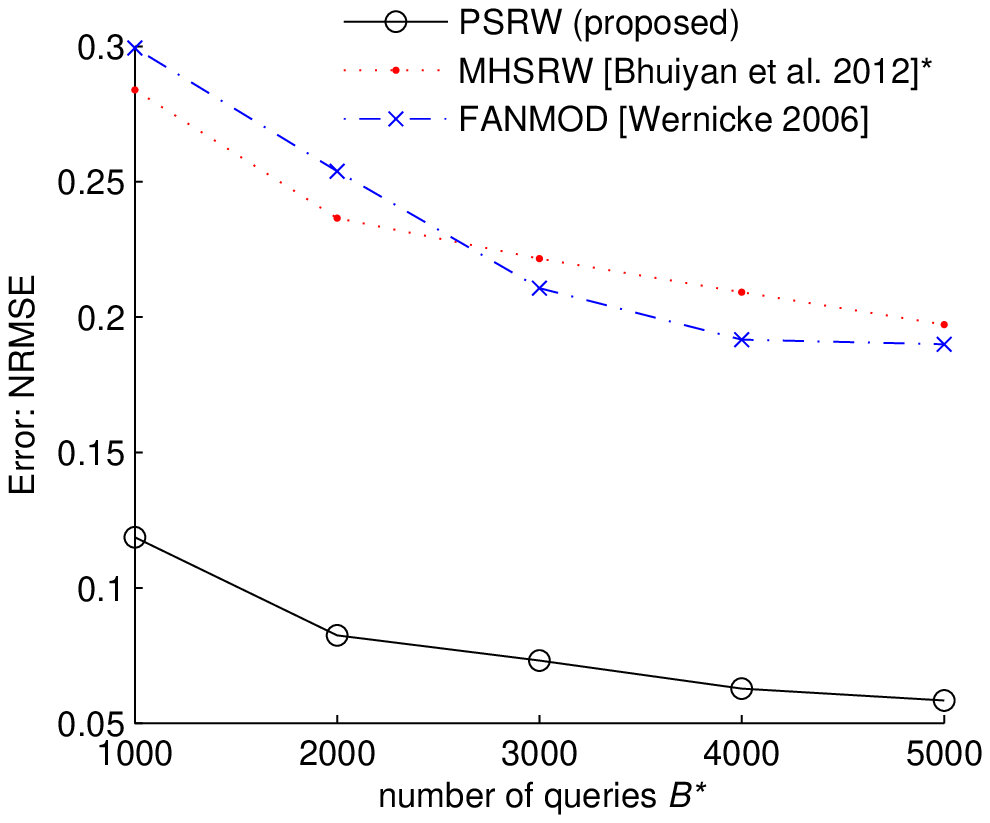}}
\subfigure[6-node tree with depth one]{
\includegraphics[width=0.49\textwidth]{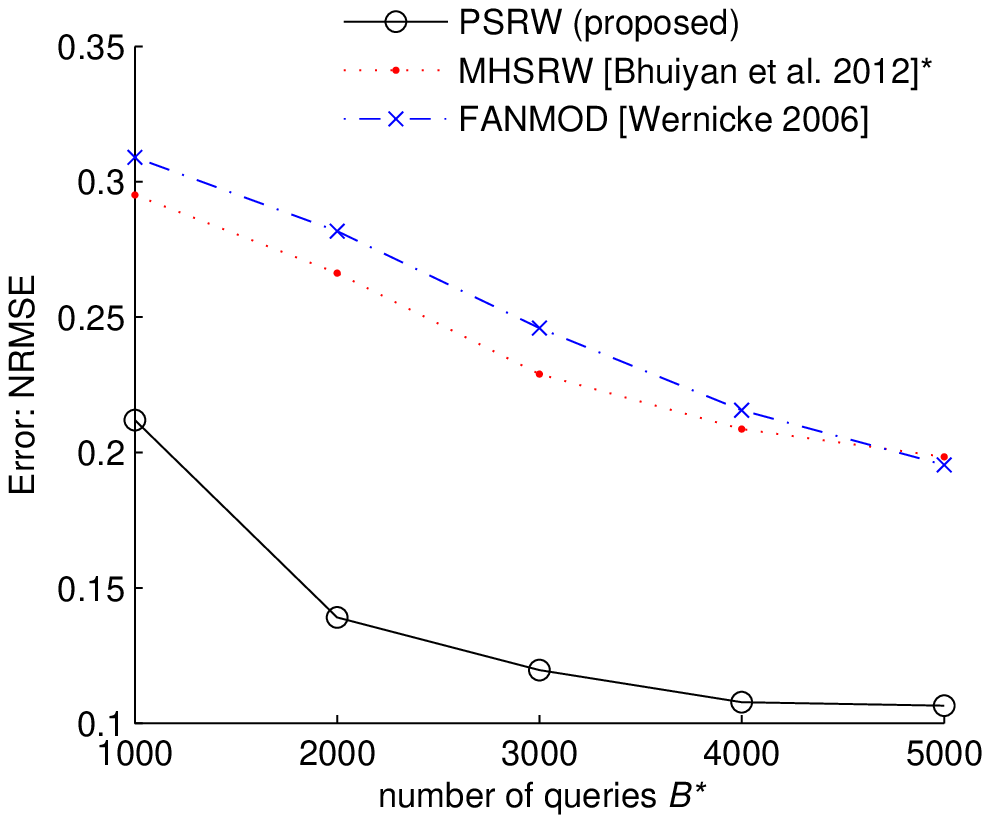}}
\caption{(Gnutella) Compared NRMSEs of concentration estimates of one 5-node undirected CIS class and one 6-node undirected CIS class for different methods.}\label{fig:fiveandsixnode}
\end{figure*}

\begin{figure*}[htb]
\center
\subfigure[Local database (near zero query delay)]{
\includegraphics[width=0.49\textwidth]{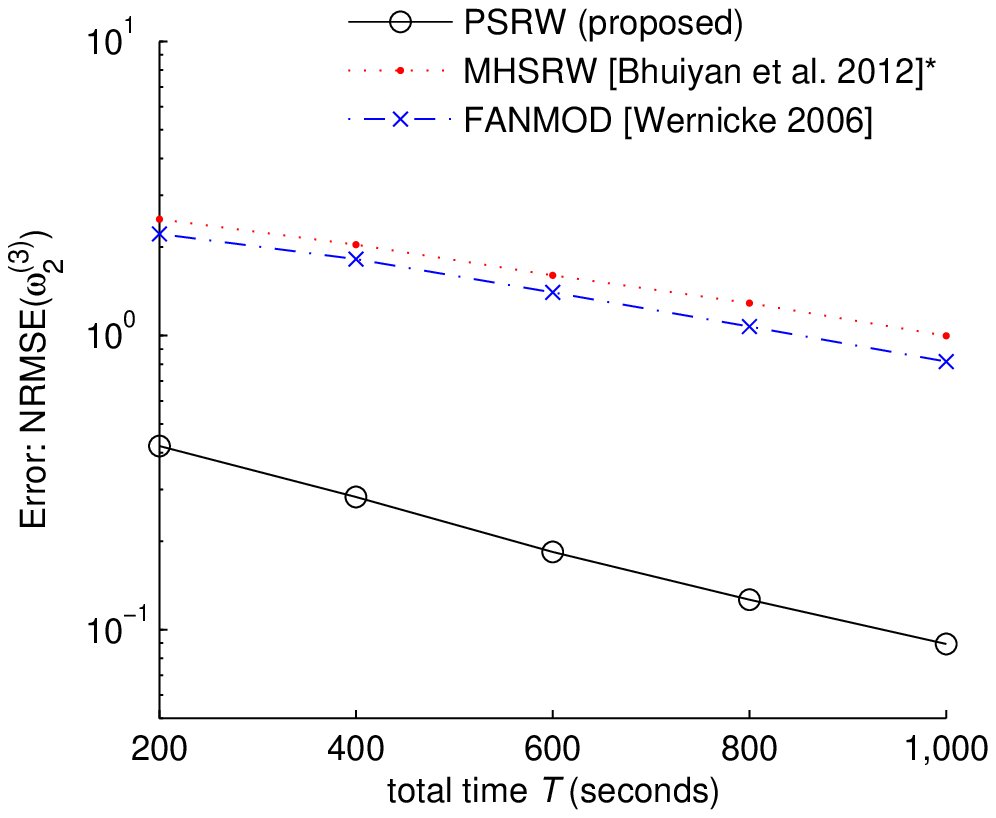}}
\subfigure[Remote database (100 milliseconds of query delay)]{
\includegraphics[width=0.49\textwidth]{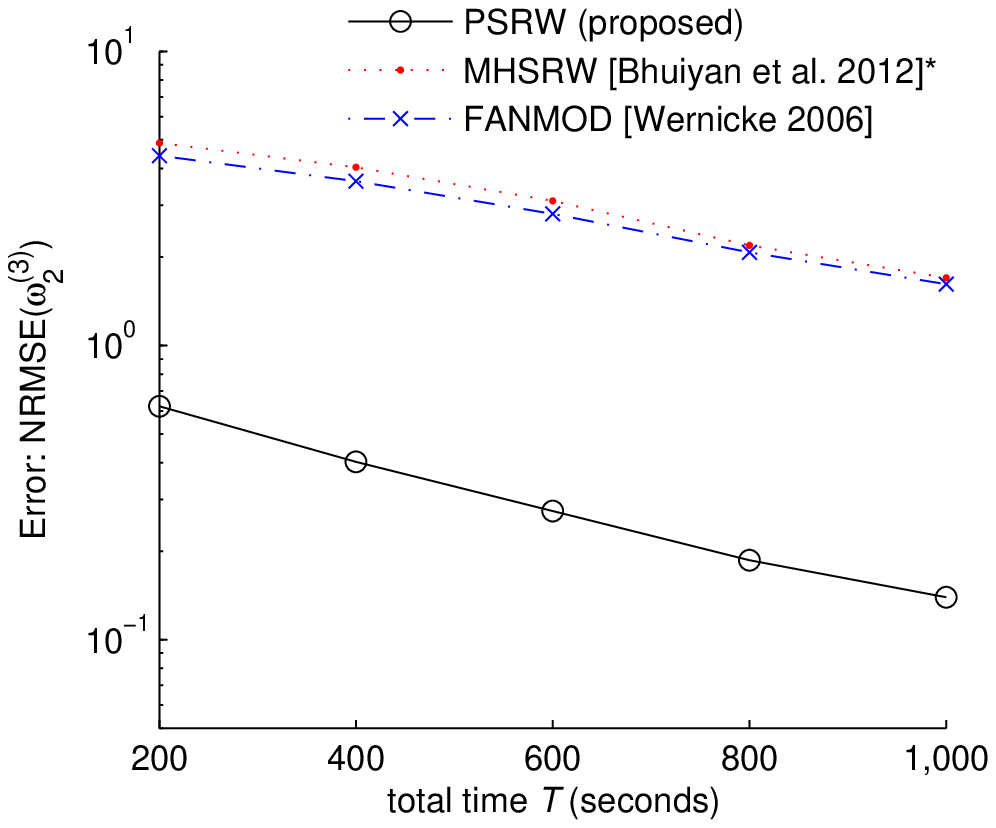}}
\caption{(Flickr) NRMSEs of error estimates of $\omega_2^{(3)}$ under for computer time $T$.}\label{fig:computationalcost}
\end{figure*}

\subsection{\textbf{Comparison with GUISE}}
Next, we evaluate the performance of our method MSS for the special case of simultaneous estimation of CISes $k= 3, 4, 5$ concentrations as in GUISE~\cite{Bhuiyan2012}.
Let $\omega^{(k)}=(\omega_1^{(k)},\ldots,\omega_{T_k}^{(k)})$ and $\hat{\omega}^{(k)}=(\hat{\omega}_1^{(k)},\ldots,\hat{\omega}_{T_k}^{(k)})$,
where $\omega_i^{(k)}$ is the concentration of subgraph class $C_i^{(k)}$ and $\hat{\omega}_i^{(k)}$  is an estimate of $\omega_i^{(k)}$.
We define the  root mean square error (RMSE) as:
\[
\text{RMSE}(\hat{\omega}^{(k)})=\sqrt{\text{E}[\sum_{i=1}^{T_k}(\hat{\omega}^{(k)}_i-\omega^{(k)}_i)^2]}, \qquad k=3,4, \text{ and } 5,
\]
which measures the error of the estimate $\hat{\omega}^{(k)}$
with respect to its true value $\omega^{(k)}$.
Note here the variable $\hat{\omega}^{(k)}$ of function $\text{RMSE}(\hat{\omega}^{(k)})$ is a \emph{vector}.
In our experiments, we average the estimates and calculate their
RMSEs over 1,000 runs.
Fig.~\ref{fig:CMPGuise} shows RMSEs of estimates of $\omega^{(3)}$, $\omega^{(4)}$, and $\omega^{(5)}$ for different methods under $B^*\!=\!3,000$ queries.
Besides MSS and GUISE, we also use PSRW to estimate $\omega^{(3)}$, $\omega^{(4)}$, and $\omega^{(5)}$ respectively.
For simplicity, we provide PSRW a budget of 1,000 queries for each value of $k$,
since it is hard to determine the optimal budget allocation for PSRW to jointly estimate $\omega^{(3)}$, $\omega^{(4)}$, and $\omega^{(5)}$,
The results show that MSS is more accurate than PSRW, and is nearly three times more accurate than GUISE.
\begin{figure*}[htb]
\center
\includegraphics[width=0.8\textwidth]{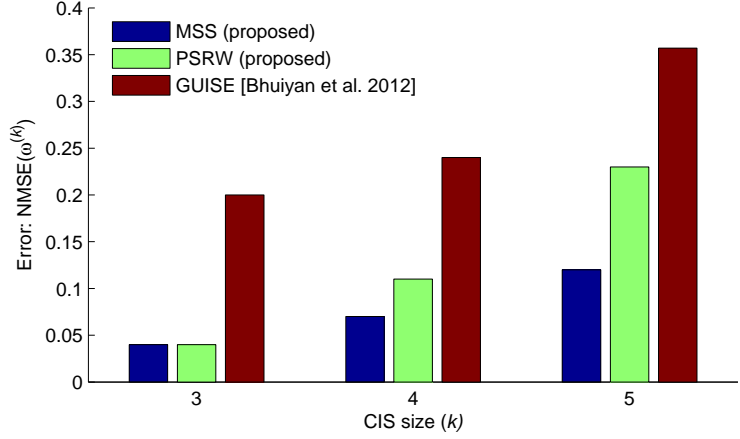}
\caption{(Gnutella) Compared errors of characterizing 3-node, 4-node, and 5-node undirected CIS classes simultaneously for different methods under the same number of queries $B^*=3,000$.}\label{fig:CMPGuise}
\end{figure*}

\section{Applications} \label{sec:application}
In this section, we apply our
methods to understand intrinsic properties of some large OSNs.  We conduct
experiments on Chinese OSNs Sina microblog\footnote{www.weibo.com} and Douban\footnote{www.douban.com}.  Sina
microblog is the most popular Chinese microblog service, and has many
features similar to Twitter.  It has more than 300 million registered users
as of February 2012.  Douban provides an exchange platform for reviews and
recommendations on movies, books, and music albums. It has approximately 6
million registered users as of 2009~\cite{ZhaoNetsci2011}.  Douban and Sina
microblog can be both modeled as directed graphs, where edges are formed by
users' following and follower relationships.
We conducted experiments in September 2012 on Sina microblog and Douban.
By using PSRW,
we sampled approximately 500,000 3-node CISes
from Sina microblog and Douban respectively.
Fig.~\ref{fig:exam_motif} (a) shows
the estimated concentrations of 3-node directed CIS classes.  It shows that closed
subgraph classes (classes 8--13) have much lower concentrations than unclosed
subgraph classes (classes 1--6), which indicates that Douban and Sina microblog
have a small fraction of closed triangles, and thus they have small clustering
coefficients, unlike the friendship relationship based OSNs such as
Facebook.  We observe that the concentration of subgraph class 7 is almost zero,
and is omitted from the concentration results shown in
Fig.~\ref{fig:exam_motif}. From Fig.~\ref{fig:3nodeclasses}, we observe that
subgraph class 7 is a directed circle of three nodes, which corresponds to three persons A, B, C,
with A following B, B following C, and C following A.
A concentration of zero might be explained by the asymmetry of following relationships,
where a following edge usually indicates statuses of the two end
users, e.g., an edge from a low-status user to high-status user such as a
celebrity.  Therefore, three users with different statuses are unlikely to
form a closed circle.

Next, we study the Z-scores of these subgraph classes.
The Z-score of each subgraph class $C_i^{(k)}$, $k>1$, is defined as
\begin{equation}\label{eq:zscore}
Z_i^{(k)}=\frac{\omega_i^{(k)}-\mu_{i}^{(k)}}{\sigma_{i}^{(k)}}, \qquad 1\le
i\le T_k, \end{equation}
where $\mu_{i}^{(k)}$ and $\sigma_{i}^{(k)}$ are the
mean and the standard deviation of the concentration of
$C_i^{(k)}$ for random
graphs with the same in-degree and out-degree sequence as $G_d$. Clearly the
Z-score of $C_i^{(k)}$ is a qualitative measure on the significance of
$C_i^{(k)}$~\cite{Milo2002}.

We propose a method to estimate $\mu_{i}^{(k)}$
and $\sigma_{i}^{(k)}$ as follows:
First, we use graph sample methods such as RW to estimate the joint
in-degree and out-degree distribution $\bphi=(\phi(i,j): i,j\ge 0)$,
where $\phi(i,j)$ is the fraction of nodes in $G_d$ with in degree $i$
and out degree $j$.
In essence, this is similar to the
%We omit details here, which is the same as the
problem of estimating node label densities as studied in our
previous work~\cite{Ribeiro2010}.
We use the configuration model~\cite{Molloy1995} to generate random
networks according to the estimated joint
in-degree and out-degree distribution $\hat\phi(i,j)$.
To generate a random graph, we first generate
$|V|$ nodes, and the in-degree and
out-degree of each node are
randomly selected according to $\hat\phi(i,j)$,
where the graph size $|V|$ can be estimated by
sampling methods proposed in~\cite{Katzir2011}.
We then use the configuration
model~\cite{Molloy1995} to generate a group of random graphs.
Algorithm~\ref{alg:randomgraph} describes the pseudo-code of our method for generating a random graph.
Finally, we compute the mean and standard deviation of the subgraph
class concentration based on randomly generated graphs.

\begin{algorithm}
\caption{Pseudo-code of random graph generation algorithm.}\label{alg:randomgraph}
\begin{algorithmic}[1]
\State \textbf{Step 1}: Assign each node $v$ with $d_I(v)$ incoming edge stubs
    (in-stubs) and $d_O(v)$ outgoing edge stubs (out-stubs).
\State \textbf{Step 2}: Pick an unconnected in-stub randomly from all nodes' in-stubs.
   Denote the associated node of selected in-stub as $v_i$.
\State \textbf{Step 3}: Pick an unconnected out-stub randomly from all nodes' out-stubs.
   Denote the associated node of selected out-stub as $v_o$.
   Repeat this step when $v_o=v_i$ or there already exists an edge from
   $v_i$ to $v_j$.
\State \textbf{Step 4}: Connect the selected in-stub and out-stub.
\State Repeat Step 2 to Step 4 until no unconnected in-stub or out-stub remains.
\end{algorithmic}
\end{algorithm}

Using the above method, we estimate the joint degree distribution based on
nearly one million unique nodes sampled by RW for Sina microblog and
Douban respectively, and then generate 1,000 random graphs to compute
the mean and the standard deviation of 3-node subgraph classes' concentrations,
which are used for estimating Z-scores shown in Eq.~(\ref{eq:zscore}).
Fig.~\ref{fig:exam_motif} (b) shows estimated Z-scores of
3-node directed CISes.  We find that subgraph classes 1 and 3 have
higher Z-scores in Sina microblog than Douban, where subgraph class 1 can be
viewed as a {\em listening type}, i.e., users follow many
celebrities, and subgraph class 3 can be viewed as a broadcast type,
i.e., celebrities have many fans.  This indicates that
Sina microblog acts more like a news media than an OSN,
which is similar to Twitter as observed in~\cite{Kwak2010}.
Subgraph class 6 has a higher Z-score in Douban than Sina microblog.
It may be because Douban is an interest-based network, where an edge
between two users with many common
interests is more likely to be symmetric than asymmetric.

\begin{figure}[htb] \center \subfigure[concentrations]{
\includegraphics[width=0.6\textwidth]{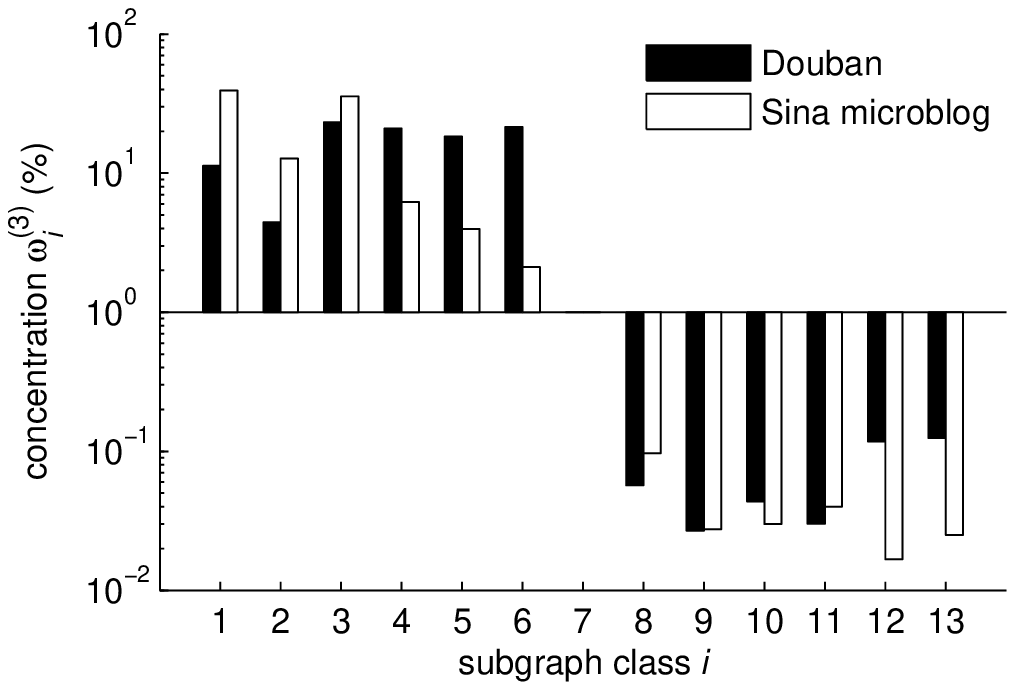}}
\subfigure[Z-scores]{
\includegraphics[width=0.6\textwidth]{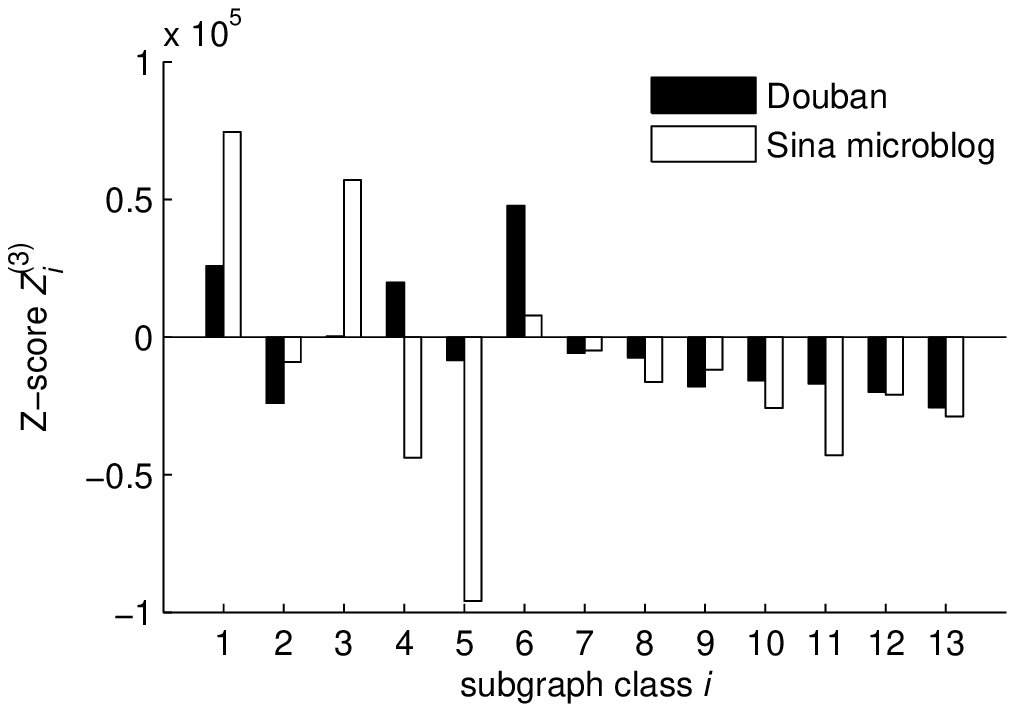}} \caption{Results of
real applications for all 3-node directed CISes.}\label{fig:exam_motif}
\end{figure}

\section{\textbf{Related Work}} \label{sec:related}
In this paper we aim to characterize small subgraphs in \emph{a single large graph},
which is a very different problem than that of estimating the number of subgraph patterns appearing in \emph{\textbf{a large set of graphs}} studied in~\cite{Hasan2009}.
Our problem can be directly solved by methods of enumerating all subgraphs of a specific size and type, such as triangle listing~\cite{ChuTKDD2012} and maximal clique enumeration~\cite{ChengTODS2011}.
There are several subgraph concentration computation methods for motif discovery using different subgraph enumeration and counting methods~\cite{ChenKDD2006,Kashani2009}.
However these methods need to process the whole graph and are computationally hard for large graphs.
Meanwhile most of these methods are difficult to combine with sampling techniques.
OmidiGenes et al.~\cite{OmidiGenes2009} proposed a subgraph enumeration and counting method using sampling.
However this method suffers from unknown sampling bias.
To estimate subgraph class concentrations, Kashtan et al.~\cite{Kashtan2004} proposed a connected subgraph sampling method using random edge sampling. However their method is computationally expensive when calculating the weight of each sampled subgraph, which is used for correcting bias introduced by edge sampling.
To address this drawback, Wernicke~\cite{Wernicke2006} proposed a new method named FANMOD based on enumerating subgraph trees to detect network motifs.
To sample a $k$-node CIS, their method needs to explore more than $k$ nodes, which is expensive when exploring graph topology via crawling.
Neither the method proposed in~\cite{Kashtan2004} nor~\cite{Wernicke2006} can be applied to detect motifs in OSNs without the complete knowledge of the graph topology,
since they rely on uniform edge sampling and uniform node sampling techniques respectively,
which may not be feasible because these sampling functions are not supported by most OSNs.
%Moreover, to detect network motifs, we cannot directly use their methods to generate random graphs that exhibit the same joint in-degree and out-degree sequence as the graph under study, because full graph topologies are usually not available.

Similar to estimate subgraph class concentrations,
Bhuiyan et al.~\cite{Bhuiyan2012} propose a method GUISE for estimating 3-node, 4-node, and 5-node subgraph frequency distribution, that is, ($\frac{n_x}{N}: x$ is a 3-node, 4-node, or 5-node undirected and connected subgraph class), where $n_x$ be the number of undirected CISes in subgraph class $x$,
and $N$ is the total number of 3-node, 4-node, and 5-node undirected CISes.
GUISE builds a new graph $G_{mix}$, whose node set consists of all 3-node, 4-node, and 5-node CISes.
For a 3-node CIS, all 3-node and 4-node CISes having 2 and 3 nodes in common respectively are its neighbors in $G_{mix}$.
For a 4-node CIS, all 3-node, 4-node, and 5-node CISes having 3, 3, and 4 nodes in common respectively are its neighbors in $G_{mix}$.
For a 5-node CIS  all 4-node and 5-node CISes with 4 nodes in common are its neighbors in $G_{mix}$.
To estimate subgraph frequency distribution, GUISE performs a Metropolis-Hastings based sampling method
over $G_{mix}$. Hardiman and Katzir~\cite{HardimanandKatzir2013} propose random walk based sampling methods for estimating the network average and global clustering coefficients. Gjoka et al.~\cite{Gjoka2013} propose a uniform node sampling based method for estimating the clique (i.e., complete subgraph) size distribution. The methods in~\cite{HardimanandKatzir2013,Gjoka2013} are difficult to extend to measure concentrations of subgraph classes.

\section{\textbf{Conclusions}} \label{sec:conclusions}
In this paper we propose two random walk based sampling methods to estimate subgraph
class concentrations when the complete graph topology is not available.
The experimental results show that our methods PSRW and SRW only need to sample a very
small fraction of subgraphs to obtain an accurate and unbias estimate, and significant
reduces the number of samples required to achieve the same estimation accuracy of
state-of-the-art methods such as FANMOD.
Also, simulation results show that PSRW is much more accurate and computational efficient
than SRW.
%Finally we apply PSRW to Sina microblog and Douban, and determine that Sina microblog acts more likely as a news media.

\section*{\textbf{Appendix}}
\begin{lemma}\label{lemma:subgraphexisting} When a graph $G=(V,E)$ is connected,
 for each node $v\in V$, we can generate a $(k+1)$-node tree with a root $v$
that contains $min\{d(v),k\}$ neighbors,
where $d(v)$ is the degree of $v$ in graph $G$, and $1\le k\le |V|-1$.
\end{lemma}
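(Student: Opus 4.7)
The plan is to handle the two natural cases $d(v)\ge k$ and $d(v)<k$ separately. In the first case, $\min\{d(v),k\}=k$, and it suffices to exhibit a star: pick any $k$ neighbors $u_1,\dots,u_k$ of $v$ and take $T$ to be the tree with edges $\{(v,u_i):1\le i\le k\}$. This tree has $k+1$ nodes, is rooted at $v$, and contains exactly $k=\min\{d(v),k\}$ neighbors of $v$.

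For the harder case $d(v)<k$, we have $\min\{d(v),k\}=d(v)$, so the tree must contain \emph{all} neighbors of $v$, plus $k-d(v)$ additional nodes from farther away in $G$. I would build $T$ incrementally. First, initialize $T$ with the root $v$ and all $d(v)$ neighbors of $v$, attached by the $d(v)$ edges incident to $v$; this is a star on $d(v)+1$ nodes. Then, repeatedly grow $T$ by one node at a time: while $|V(T)|<k+1$, pick any edge $(x,y)\in E$ with $x\in V(T)$ and $y\notin V(T)$, and add $y$ together with the edge $(x,y)$ to $T$. Each such addition keeps $T$ a tree (no cycle is introduced because $y$ is new) and keeps $v$'s set of neighbors in $T$ unchanged (since any newly added $y$ is, by choice, not already in $T$ and is attached through an existing non-root or possibly root edge — but since all of $v$'s neighbors are already in $T$, any new $y$ is not a neighbor of $v$, so $v$'s neighbor count in $T$ stays at $d(v)$).

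The main obstacle is to justify that the growth step always finds such an edge $(x,y)$ until $|V(T)|=k+1$. This is exactly where connectedness of $G$ is used: as long as $V(T)\subsetneq V$, connectedness forces the existence of at least one edge from $V(T)$ to $V\setminus V(T)$; otherwise $V(T)$ would form a nontrivial component. Since $k\le |V|-1$, we have $k+1\le |V|$, so the process can proceed until $|V(T)|=k+1$ without exhausting $V$. At termination, $T$ is a tree rooted at $v$ with exactly $k+1$ nodes and $\min\{d(v),k\}=d(v)$ neighbors of $v$, as required. A brief remark that the root-edge orientation (from $v$ downward) makes $T$ a rooted tree finishes the argument.
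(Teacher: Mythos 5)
Your proof is correct, and it is essentially the paper's argument made explicit: the paper simply runs a BFS from $v$ and keeps the first $k+1$ visited nodes, which automatically includes $\min\{d(v),k\}$ neighbors of $v$ because BFS visits all of $v$'s neighbors before any node at distance two or more. Your star-plus-greedy-growth construction realizes the same idea, with the added benefit of spelling out the frontier-edge existence step (connectedness of $G$ guarantees an edge leaving $V(T)$ whenever $|V(T)|<|V|$) that the paper's one-line BFS proof leaves implicit.
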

\begin{proof}
One can use breadth-first search (BFS) to traverse $G$ starting from $v$,
then build a tree from the first $k$ nodes visited by BFS, where
$2\le k\le |V|$. This tree clearly contains
$min\{d(v),k\}$ neighbors of $v$ and the root node is $v$.
\end{proof}

\begin{lemma}[\cite{Roberts2004,Jones2004,LeeSigmetric2012}]\label{lemma:nodeestimator}
Let $G=(V, E)$ be connected and non-bipartite.
Let $u_j$ be the $j$-th node sampled by a RW on $G$,
where $1\le j\le B$ and $B$ be the number of samples.
Denote by $\bpi=(\pi_v, v\in V)$ the stationary distribution, where $\pi_v=\frac{d_v}{2|E|}$.
Then, for any function $f(v):V\rightarrow \mathbb{R}$,
where $\sum_{\forall v\in V} f(v)<\infty$, we have
\begin{equation*}
\lim_{B\rightarrow \infty} \frac{1}{B} \sum_{j=1}^B f(u_j) \xrightarrow{a.s.} \frac{1}{|V|} \sum_{\forall v\in V} f(v)\pi_v.
\end{equation*}
\end{lemma}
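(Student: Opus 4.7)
The plan is to recognize this as the standard strong law of large numbers for an ergodic finite-state Markov chain and to verify its three hypotheses -- irreducibility, aperiodicity, and the explicit form of the stationary distribution -- directly from the assumptions on $G$.

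First I would model the random walk as the time-homogeneous Markov chain on the finite state space $V$ with transition kernel $P(u,v)=\mathbf{1}\{(u,v)\in E\}/d(u)$. Since $G$ is connected, for any $u,v\in V$ there is a $u$\,-\,$v$ path in $G$, so $P^n(u,v)>0$ for some $n\ge 1$; this gives irreducibility. Since $G$ is non-bipartite, it contains an odd cycle; traversing that cycle from any vertex (reachable by irreducibility) yields a return time of odd length, while also $2$ is a return time via any edge's back-and-forth, so $\gcd$ of return times is one at one state and, by irreducibility, at every state. Hence the chain is aperiodic.

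Second, I would verify that $\pi_v=d(v)/(2|E|)$ is stationary through detailed balance, edge by edge. For $(u,v)\in E$,
\[
\pi_u\, P(u,v)\;=\;\frac{d(u)}{2|E|}\cdot\frac{1}{d(u)}\;=\;\frac{1}{2|E|}\;=\;\pi_v\, P(v,u),
\]
and both sides vanish for non-edges. Summing over $u$ gives $\pi P=\pi$, and irreducibility together with finiteness of $V$ forces $\pi$ to be the unique invariant distribution.

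Finally, I would invoke the ergodic theorem for irreducible aperiodic Markov chains on a finite state space (e.g.\ Norris, Theorem 1.10.2, or the finite-state specialization of Meyn--Tweedie, Theorem 17.0.1): for any $f:V\to\mathbb{R}$, finiteness of $V$ makes $f$ automatically bounded and $\pi$-integrable, so
\[
\frac{1}{B}\sum_{j=1}^{B} f(u_j)\;\xrightarrow{a.s.}\;\sum_{v\in V} f(v)\,\pi_v,
\]
regardless of the starting distribution of $u_1$. The plan contains no essential obstacle, since each step is routine; the only subtlety worth flagging is the $\tfrac{1}{|V|}$ prefactor in the displayed equation of the lemma, which does not arise from the standard ergodic theorem and appears to be a typo. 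Dropping it yields exactly the identity $\lim_B \tfrac{1}{B}\sum_j f(u_j)=\sum_v f(v)\pi_v$ that is required downstream so that the Horvitz--Thompson estimators of Theorems~\ref{theorem:nodeestimatorunbiased}, \ref{theorem:edgeestimatorunbiased}, and \ref{theorem:reduceestimatorunbiased} come out asymptotically unbiased; I would prove the corrected form and note the discrepancy.
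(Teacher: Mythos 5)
The paper never proves this lemma: it is quoted as a known result from the cited references (Roberts--Rosenthal, Jones, Lee et al.), so there is no in-paper proof to match against. Your argument is the standard and correct derivation of that result for a finite graph: connectivity gives irreducibility, an odd cycle plus a back-and-forth over an edge gives aperiodicity, the edge-by-edge detailed-balance check identifies $\pi_v=d_v/(2|E|)$ as the unique stationary law, and the finite-state ergodic theorem (with $f$ automatically bounded) yields $\frac{1}{B}\sum_{j=1}^B f(u_j)\xrightarrow{a.s.}\sum_{v}f(v)\pi_v$. One small remark: aperiodicity is not actually needed for the almost-sure time-average limit (it matters only for convergence in distribution), so that step is harmless but dispensable. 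Your flag of the $\frac{1}{|V|}$ prefactor is correct -- since $\pi$ is already a probability distribution, the stated right-hand side double-normalizes and the factor should not be there. Be aware, though, of how this interacts with the rest of the paper: the appendix proofs of Theorems~\ref{theorem:nodeestimatorunbiased} and~\ref{theorem:reduceestimatorunbiased} apply the lemma exactly as stated, so their intermediate limits (e.g.\ $\lim_B \frac{1}{B}\sum_j 1/d^{(k)}(s_j)=1/\sum_t d^{(k)}(t)$, whose true value is $|C^{(k)}|/\sum_t d^{(k)}(t)$) carry the same spurious $1/|C^{(k)}|$; because the Horvitz--Thompson estimators are self-normalized ratios, this constant cancels between numerator and denominator, so the final unbiasedness conclusions hold under either convention -- your corrected statement is the mathematically right one, and using it merely rescales those intermediate displays consistently.
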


\begin{lemma}[\cite{Meyn2009,Ribeiro2010}]\label{lemma:edgeestimator}
Let $G=(V, E)$ be an undirected graph which is connected and non-bipartite.
Let $(u_j,v_j)$ ($1\le j\le B$) be the $j$-th edge sampled by a RW,
where $B$ is the number of sampled edges.
Denote function $f(u,v)\!:\!V\!\times\! V\!\rightarrow\! \mathbb{R}$.
Then, we have
\[
\lim_{B\rightarrow \infty} \frac{1}{B} \sum_{j=1}^B f(u_j, v_j) \xrightarrow{a.s.} \frac{1}{|E|} \sum_{\forall (u,v)\in E} f(u, v),
\]
for any function $f$ with $\sum_{\forall (u,v)\in E} f(u, v)<\infty$.
\hfill $\square$
\end{lemma}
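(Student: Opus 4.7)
The plan is to lift the vertex-level random walk to a Markov chain on directed edges and apply the standard ergodic theorem at the edge level, paralleling Lemma~\ref{lemma:nodeestimator} but on a larger state space. Define the edge chain $Z_j = (u_j, u_{j+1})$: because the vertex RW has kernel $P(u,v) = 1/d_u$ and its next state depends only on the current vertex, $(Z_j)_{j\ge 1}$ is itself Markov on the set $\tilde E$ of directed edges, with kernel $Q((u,v),(v,w)) = 1/d_v$ whenever $\{v,w\}\in E$ and zero otherwise. Note that the transition depends only on the second coordinate of $Z_j$, so the edge chain inherits its dynamics cleanly from the node chain.

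Next, both irreducibility and aperiodicity of $Q$ lift from the hypotheses on $G$. Connectedness gives irreducibility: any vertex-path from $v$ to $u'$ in $G$ induces a directed-edge path from $(u,v)$ to $(u',v')$ in $\tilde E$. Non-bipartiteness gives an odd-length closed walk through any vertex, which translates into an odd-length return time at any directed edge, so the period of $Q$ equals $1$. The stationary distribution is identified by the natural candidate $\tilde\pi(u,v) = \pi_u P(u,v) = 1/(2|E|)$, i.e.\ uniform on $\tilde E$, which is verified to satisfy $\tilde\pi Q = \tilde\pi$ by a one-line balance check using $\sum_{x:\{x,u\}\in E} 1 = d_u$.

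Applying the strong law for ergodic Markov chains (exactly as invoked for the node chain in Lemma~\ref{lemma:nodeestimator}) then yields $(1/B)\sum_{j=1}^B f(u_j,v_j) \xrightarrow{a.s.} \sum_{(u,v)\in \tilde E} f(u,v)/(2|E|)$. If the notation $\sum_{(u,v)\in E}$ in the statement refers to directed edges (so $|E| = |\tilde E|$), this is already the stated identity. If instead $E$ denotes undirected edges and $f$ is treated as symmetric, as in the paper's applications such as Eq.~(\ref{eq:estimatorofomegaedge}) where the summand depends only on the unordered $k$-node CIS generated by the edge, then each undirected edge contributes both orientations identically in $\tilde E$, the directed sum equals twice the undirected one, and the limit collapses to $(1/|E|)\sum_{(u,v)\in E} f(u,v)$ under the matching normalization. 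The only real subtlety is this directed/undirected bookkeeping; the connectedness and non-bipartiteness hypotheses are precisely what is needed to invoke the ergodic theorem on $\tilde E$, and the rest is routine Markov chain theory.
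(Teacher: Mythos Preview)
The paper does not actually supply a proof of this lemma: it is stated with citations to \cite{Meyn2009,Ribeiro2010} and closed with a $\square$, and the appendix only proves Theorems~\ref{theorem:connected}--\ref{theorem:reduceestimatorunbiased}. So there is no paper proof to compare against; the lemma is invoked as a known ergodic-theoretic fact.

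That said, your argument is the standard one and is correct. Lifting the vertex walk to the directed-edge chain $Z_j=(u_j,u_{j+1})$ with kernel $Q((u,v),(v,w))=1/d_v$, checking that connectedness of $G$ gives irreducibility of $Q$ and non-bipartiteness gives aperiodicity, verifying that the uniform distribution $\tilde\pi(u,v)=1/(2|E|)$ on directed edges is stationary, and then applying the ergodic theorem for finite irreducible aperiodic chains is exactly how this result is obtained in the references the paper cites. Your handling of the directed/undirected bookkeeping is also right: in the paper's downstream use (Theorem~\ref{theorem:edgeestimatorunbiased}) the function $f$ depends only on the $k$-node CIS generated by the edge and is therefore symmetric in $(u,v)$, so the factor of two cancels and the stated undirected normalization is recovered. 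Nothing is missing.
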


\subsection{\textbf{Proof of Theorem~\ref{theorem:connected}}}
We use induction to prove $G^{(k)}$ is connected.

{\it Initial Step}. Since $G$ is connected, clearly there exists a path (edge sequence) between any two disconnected edges. Therefore $G^{(2)}$ is connected.
{\it Inductive Step}. Our inductive assumption is that $G^{(k)}$ is connected, $2\le k \le |V|-2$. We now prove that $G^{(k+1)}$ is also connected. For any two different CISes $x^{(k+1)}$ and $y^{(k+1)}$ in $C^{(k+1)}$, from Lemma~\ref{lemma:subgraphexisting} we can easily show that there exists a $k$-node CIS $x^{(k)}$ contained by  $x^{(k+1)}$, and a $k$-node CIS $y^{(k)}$ contained by  $y^{(k+1)}$.
When $x^{(k+1)}$ and $y^{(k+1)}$ are not connected, our inductive assumption shows that there exists a $k$-node CIS sequence $s_i^{(k)}$ ($1\le i\le l$) in graph $G^{(k)}$, where $s_1^{(k)}$ connects to $x^{(k)}$, $s_l^{(k)}$ connects to $y^{(k)}$, and two adjacent $k$-node CIS $s_i^{(k)}$ and $s_{i+1}^{(k)}$ are connected, where $2\le i< l$.
Denote by $s_1^{(k+1)}$ the $(k+1)$-node CIS consisting of $k+1$ different nodes appearing in $s_1^{(k)}$ and $x^{(k)}$,
$s_{l+1}^{(k+1)}$ the $(k+1)$-node CIS consisting of $k+1$ different nodes appearing in $s_l^{(k)}$ and $y^{(k)}$,
and $s_{i}^{(k+1)}$ the $(k+1)$-node CIS consisting of $k+1$ different nodes appearing in $s_i^{(k)}$ and $s_{i+1}^{(k)}$,
where $2\le i< l$.
In graph $G^{(k+1)}$, we can easily find that $s_1^{(k+1)}$ connects to $x^{(k+1)}$, $s_{l+1}^{(k+1)}$ connects to $y^{(k+1)}$, and two adjacent $(k+1)$-node CISes $s_i^{(k+1)}$ and $s_{i+1}^{(k+1)}$ ($1\le i\le l$) are connected. This shows that there exists a path between any two disconnected nodes ($(k+1)$-node CISes) in $G^{(k+1)}$. Therefore graph $G^{(k+1)}$ is connected.

\subsection{\textbf{Proof of Theorem~\ref{theorem:bipartitedegree}}}
Denote by $v$ the node with degree larger than two. Lemma~\ref{lemma:subgraphexisting} indicates that there exists a $k$-node tree $t$ with root $v$ which contains at least three neighbors of $v$, where $4\le k\le |V|$. We easily find that $t$ has at least three leaves. Since $t$ is still connected after we remove any leaf, there exist at least three different $(k-1)$-node CISes consisting of $k-1$ nodes in $t$ obtained by removing one leaf of $t$, and these CISes are connected to each other in graph $G^{(k-1)}$. Similarly there exist at least three $(k-2)$-node CISes consisting of $k-2$ nodes in $t$ by excluding two leaves of $t$, which are connected to each other in $G^{(k-2)}$. Therefore, each $G^{(k)}$ ($2\le k<|V|$) is non-bipartite since it has at least one odd length loop.
When $G$ has no node with degree larger than two, since $G$ is connected and non-bipartite, we can easily show that $G$ is a $|V|$-node circle and $|V|$ is odd. For each node $v\in V$, we can generate a $k$-node CIS consisting of $v$ and $k-1$ nodes close to $v$ in clockwise direction, where $2\le k<|V|$. Finally there are $|V|$ different $k$-node CISes, and they form an odd length loop in graph $G^{(k)}$. Therefore $G^{(k)}$ is non-bipartite.

\subsection{\textbf{Proof of Theorem~\ref{theorem:nodeestimatorunbiased}}}
SRW can be viewed as a regular RW over graph $G^{(k)}$, $2\le k<|V|$.
For each $\omega_i^{(k)}$, $1\le i\le T_k$,  we then obtain following equations from
Lemma~\ref{lemma:nodeestimator} and Theorem~\ref{theorem:srwstationary} for
non-bipartite and connected $G^{(k)}$,
\begin{equation*}
\begin{split}
& \hspace{-0.2in}
 \lim_{B\rightarrow \infty} \frac{1}{B} \sum_{j=1}^B \frac{\mathbf{1}(C(s_j)= C_i^{(k)})}{d^{(k)}(s_j)}\\
&\xrightarrow{a.s.} \frac{1}{|C^{(k)}|} \sum_{\forall s\in C^{(k)}} \frac{\mathbf{1}(C(s)= C_i^{(k)})}{d^{(k)}(s)} \pi^{(k)}(s)\\
&=\frac{1}{|C^{(k)}| \sum_{t\in C^{(k)}} d^{(k)} (t)} \sum_{\forall s\in C^{(k)}} \mathbf{1}(C(s)= C_i^{(k)})\\
&=\frac{\omega_i^{(k)}}{\sum_{t\in C^{(k)}} d^{(k)} (t)}.
\end{split}
\end{equation*}
Similarly, we have
\[
\lim_{B\rightarrow \infty} \frac{1}{B}\sum_{j=1}^B \frac{1}{d^{(k)}(s_j)}\xrightarrow{a.s.}\frac{1}{\sum_{t\in C^{(k)}} d^{(k)} (t)}.
\]
Thus, we can easily find that $\hat\omega_i^{(k)}$ ($1\le i\le T_k$) is an asymptotically unbiased estimator of $\omega_i^{(k)}$.

\subsection{\textbf{Proof of Theorem~\ref{theorem:edgeestimatorunbiased}}}
To estimate $\tilde\omega_i^{(k)}$, $1\le i\le T_{k}$, $2\le k< |V|$,
PSRW can be viewed as a regular RW over the graph $G^{(k-1)}$.
Denote $s^*_{(u,v)}$ as the $k$-node CIS generated by $(u,v) \in R^{(k-1)}$, an edge in $G^{(k-1)}$, where $u, v\in C^{(k-1)}$ are $(k-1)$-node CISes.
For each $\omega_i^{(k)}$, $1\le i\le T_k$,  we then obtain
following equations from Lemma~\ref{lemma:edgeestimator},
\begin{equation*}
\begin{split}
&\lim_{B\rightarrow \infty} \frac{1}{B-1}\sum_{j=1}^{B-1} \frac{\mathbf{1}(C(s^*_j)= C_i^{(k)})}{I^{(k-1)}(s^*_j)\left(I^{(k-1)}(s^*_j)-1\right)}\\
&\xrightarrow{a.s.}\frac{1}{|R^{(k-1)}|} \sum_{\forall (u,v) \in R^{(k-1)}} \frac{\mathbf{1}(C(s^*_{(u,v)})= C_i^{(k)})}{I^{(k-1)}(s^*_{(u,v)})\left(I^{(k-1)}(s^*_{(u,v)})-1\right)}\\
&=\frac{1}{2|R^{(k-1)}|} \sum_{\forall s\in C^{(k)}} \mathbf{1}(C(s)= C_i^{(k)})\\
&=\frac{\omega_i^{(k)}|C^{(k)}|}{2|R^{(k-1)}|}.
\end{split}
\end{equation*}
The last equation holds because
the $k$-node CIS $s$
is generated by $\frac{\left(I^{(k-1)}(s)\right)\left(I^{(k-1)}(s)-1\right)}{2}$
edges in $R^{(k-1)}$.
Similarly, we have
\[
\lim_{B\rightarrow \infty} \frac{\sum_{j=1}^{B-1} \frac{1}{I^{(k-1)}(s^*_j)\left(I^{(k-1)}(s^*_j)-1\right)}}{B-1} \xrightarrow{a.s.}\frac{|C^{(k)}|}{2|R^{(k-1)}|}.
\]
Thus, we can easily find that $\tilde\omega_i^{(k)}$ ($1\le i\le T_{k}$) is an asymptotically unbiased estimator of $\omega_i^{(k)}$.

\subsection{\textbf{Proof of Theorem~\ref{theorem:reduceestimatorunbiased}}}
For each $\omega_i^{(k-1)}$, $1\le i\le T_{k-1}$,  we obtain following equations from
Lemma~\ref{lemma:nodeestimator} and Theorem~\ref{theorem:srwstationary} for
non-bipartite and connected $G^{(k)}$,
\begin{equation*}
\begin{split}
&\lim_{B\rightarrow \infty} \frac{1}{B} \sum_{j=1}^B \frac{1}{d^{(k)}(s_j)} \sum_{s'\in C^{(k-1)}(s_j)} \frac{\mathbf{1}(C^{(k-1)}(s')= C_i^{(k-1)})}{|O^{(k)}(s')|}\\
&\xrightarrow{a.s.}\frac{1}{|C^{(k)}|} \sum_{\forall s\in C^{(k)}} \frac{\pi^{(k)}(s)}{d^{(k)}(s)} \sum_{s'\in C^{(k-1)}(s)} \frac{\mathbf{1}(C^{(k-1)}(s')= C_i^{(k-1)})}{|O^{(k)}(s')|}\\
&=\frac{1}{|C^{(k)}| \sum_{t\in C^{(k)}} d^{(k)} (t)} \sum_{\forall s\in C^{(k)}} \sum_{s'\in C^{(k-1)}(s)} \frac{\mathbf{1}(C^{(k-1)}(s')= C_i^{(k-1)})}{|O^{(k)}(s')|}\\
&=\frac{1}{|C^{(k)}| \sum_{t\in C^{(k)}} d^{(k)} (t)} \sum_{\forall s'\in C^{(k-1)}} \sum_{s\in O^{(k)}(s')} \frac{\mathbf{1}(C^{(k-1)}(s')= C_i^{(k-1)})}{|O^{(k)}(s')|}\\
&=\frac{\omega_i^{(k-1)}|C^{(k-1)}|}{|C^{(k)}| \sum_{t\in C^{(k)}} d^{(k)} (t)}.
\end{split}
\end{equation*}
Similarly, we have
\[
\lim_{B\rightarrow \infty} \frac{\sum_{j=1}^B
\frac{1}{d^{(k)}(s_j)} \sum_{s'\in C^{(k-1)}(s_j)} \frac{1}{|O^{(k)}(s')|}}{B} \xrightarrow{a.s.}\frac{|C^{(k-1)}|}{|C^{(k)}| \sum_{t\in C^{(k)}} d^{(k)} (t)}.
\]
Thus, we can easily find that $\breve\omega_i^{(k-1)}$ ($1\le i\le T_{k-1}$) is an asymptotically unbiased estimator of $\omega_i^{(k-1)}$.

\bibliographystyle{ACM-Reference-Format-Journals}
%%% -*-BibTeX-*-
%%% Do NOT edit. File created by BibTeX with style
%%% ACM-Reference-Format-Journals [18-Jan-2012].

%\bibliographystyle{IEEEtran}
%\bibliography{IEEEabrv,samplingmotifs}
\end{document}